\newtheorem{lemma}{Lemma}
\newtheorem{remark}{Remark}
\newtheorem{proposition}[lemma]{Proposition}
\newtheorem{theorem}[lemma]{Theorem}
\newcommand{\cX}{\mathcal{X}}
\newcommand{\cY}{\mathcal{Y}}
\newcommand{\cZ}{\mathcal{Z}}
\newcommand{\cM}{\mathcal{M}}
\newcommand{\cP}{\mathcal{P}}
\newcommand{\cC}{\mathcal{C}}
\newcommand{\cI}{\mathcal{I}}
\newcommand{\perr}{p_{\textrm{err}}}
\newcommand{\abs}[1]{|#1|}
\newcommand{\eps}{\varepsilon}
\newcommand{\proj}[1]{|k\rangle\!\langle k|}
\newcommand{\cp}{\!\times\!}
\newcommand{\twonorm}[1]{\| #1 \|_{2}}
\newcommand{\twonormb}[1]{\big\| #1 \big\|_{2}}
\renewcommand{\vec}[1]{\mathbf{#1}}
\DeclareMathOperator{\Exp}{E}
\DeclareMathOperator*{\argmin}{arg\,min}
\newcommand{\PPV}{Polyanskiy \emph{et al.}}
\begin{document}
\flushbottom


\title{A Tight Upper Bound for the Third-Order Asymptotics for Most Discrete Memoryless Channels} 

\author{Marco Tomamichel,~\IEEEmembership{Member,~IEEE} and
        Vincent Y.~F.~Tan,~\IEEEmembership{Member,~IEEE}%
\thanks{M.~Tomamichel is
with the Centre of Quantum Technologies,
National University of Singapore, 3 Science Drive 2, Singapore 117542.
(email:\,cqtmarco@nus.edu.sg).
V.~Y.~F.~Tan is with the Institute for Infocomm Research (I$^2$R), A*STAR, Singapore and with
the Department of Electrical and Computer Engineering, National University of Singapore.
(email:\,vtan@nus.edu.sg).
This paper was presented in part at the IEEE International Symposium on Information Theory (ISIT 2013).}}


\IEEEpeerreviewmaketitle



\maketitle

\begin{abstract}
This paper shows that the logarithm of the $\eps$-error capacity (average error probability) for $n$ uses of a discrete memoryless channel (DMC) is upper bounded by the normal approximation plus a third-order  term that does not exceed $\frac{1}{2} \log n +O(1)$
if the $\eps$-dispersion of the channel is positive. This matches a lower bound by Y.~Polyanskiy (2010)  for DMCs with positive reverse dispersion.  If the $\eps$-dispersion vanishes, the logarithm of the  $\eps$-error capacity  is upper bounded by $n$ times the capacity plus a constant term except for a small class of DMCs and $\eps \ge \frac{1}{2}$.
\end{abstract}

%

\section{Introduction}


\label{sec:intro}
The primary information-theoretic task in point-to-point channel coding is the characterization of the maximum rate of communication over $n$ independent  uses of  a noisy   channel $W$. We are concerned in this paper with {\em discrete memoryless channels} (DMCs).  Let $M^*(W^n,\eps)$ resp.\  $M_{\max}^*(W^n,\eps)$  denote the maximum size of a length-$n$  block code for DMC $W$    having {\em average}   resp.\ {\em maximal}  error probability no larger than $\eps \in (0,1)$.  Shannon's {\em noisy-channel coding theorem} \cite{Shannon48} and   Wolfowitz's {\em strong converse}~\cite{Wolfowitz}   state that  for every $\eps \in (0,1)$,
\begin{equation}
\lim_{n\to\infty}\frac{1}{n}\log M^*(W^n,\eps)=C \quad\mbox{bits/channel use},\nonumber
\end{equation}
where $C := \max_P I(P,W)$ is the {\em channel capacity}. Since the 1960s,  there has been interest in determining finer asymptotic characterizations of the coding theorem. This is useful because such an analysis provides key insights into the amount of backoff from channel capacity for block codes of finite length $n$.  In particular, Strassen in 1962 \cite{Strassen} showed using normal approximations that the asymptotic expansion of $\log M_{\max}^*(W^n,\eps)$  satisfies
\begin{equation}
\log M_{\max}^*(W^n,\eps)=nC+\sqrt{nV_{\eps}} \Phi^{-1}(\eps)+\rho_n,  \label{eqn:sna}
\end{equation}
where $\rho_n=O(\log n)$,  $V_{\eps}$ is   the {\em  $\eps$-channel dispersion} \cite{PPV10, Pol10} and $\Phi(\cdot)$ is the Gaussian cumulative distribution function.\footnote{In fact, it was pointed out by Polyanskiy~\cite[Sec.\ 3.4.1]{Pol10} that Strassen's paper~\cite[Thm.~1.2]{Strassen} contains a gap in the case when the DMC is exotic and $\eps> \frac{1}{2}$. }   These quantities will be defined precisely in Section~\ref{sec:prelims}.
 In fact, this asymptotic expansion also holds for $M^*(W^n,\eps)$~\cite[Eqs.\ (284)-(286)]{PPV10} and implies that if an error probability of $\eps$ is tolerable, the backoff from channel capacity  $C$ at    finite blocklength $n$  is roughly $\sqrt{  {V_{\eps}}/{n}} \,  \Phi^{-1}(\eps)$. There have been  several  recent refinements to and extensions of Strassen's normal approximation in~\eqref{eqn:sna},  most prominently by Hayashi~\cite{Hayashi09} and \PPV~\cite{PPV10}. Strassen's normal approximation has also been shown to hold for many other classes of channels such as the additive white Gaussian noise (AWGN) channel \cite{Hayashi09, Pol10,PPV10}  and the additive Markovian channel~\cite{Hayashi09}.

Despite these impressive advances in the fundamental limits of channel coding, the third-order  term $\rho_n$   is not well understood.  Indeed, Hayashi in the conclusion of his paper~\cite{Hayashi09} mentions that 
\begin{quote}
{\em ``$\ldots$ the third-order coding rate is expected but appears difficult. The second order is the order $\sqrt{n}$, and it is not clear whether the third-order is a constant order or the order $\log n$'' }
\end{quote}
What we do know is that for  the binary symmetric channel (BSC),   $\rho_n=\frac{1}{2}\log n + O(1)$ \cite[Thm.\ 52]{PPV10} and for the  binary erasure channel (BEC),   $\rho_n=O(1)$ \cite[Thm.\ 53]{PPV10}. More generally,  there are classes of channels for which we have bounds on $\rho_n$ \cite[Sec.\ 3.4.5]{Pol10}. For lower bounds (achievability),  if we consider DMCs $W$ with positive reverse dispersion~\cite[Eq.~(3.296)]{Pol10}, then $\rho_n\ge\frac{1}{2}\log n  + O(1)$ \cite[Cor.~54]{Pol10}.  For   upper bounds (converse), if we restrict our attention to so-called {\em weakly input-symmetric}  DMCs \cite[Def.\ 9]{Pol10},   $\rho_n\le  \frac{1}{2}\log n + O(1)$ \cite[Thm.~55]{Pol10}.  For  {\em constant-composition codes}, it was shown~\cite{Mou12}  using strong large-deviation techniques~\cite{Cha93, Bahadur60}  that, under some regularity assumptions,  $\rho_n=\frac{1}{2}\log n + O(1)$.  Recall that a constant-composition code is one where all the codewords are of the same {\em empirical distribution} or {\em type}.   It is also claimed that the same holds for a more general class of DMCs in~\cite{Mou12a}. Our results generalize the converse bounds in~\cite{Mou12} and \cite{Mou12a}.

This paper strengthens the upper   bound  (converse) on the third-order  term  $\rho_n$.  For all DMCs whose $\eps$-dispersions are positive, we show that 
\begin{equation}
\log M^*(W^n,\eps)\le nC+\sqrt{nV_{\eps}} \Phi^{-1}(\eps)+\frac{1}{2}\log n +O(1), \label{eqn:main_res}
\end{equation}
If the $\eps$-dispersion vanishes, the corresponding bound is $\log M^*(W^n,\eps)\le nC+O(1)$, unless the DMC is exotic~\cite[Thm.\ 48]{PPV10} and $\eps \ge \frac{1}{2}$. If the DMC is exotic and $\eps=\frac{1}{2}$, we show that $\log M^*(W^n,\frac{1}{2})\le nC+\frac{1}{2}\log n+O(1)$. If  the DMC is exotic and $\eps>\frac{1}{2}$,  $\log M^*(W^n,\eps)\le nC+O\big(n^{\frac{1}{3}}\big)$,  a result by \PPV~\cite[Thm.\ 48]{PPV10}. 
Hence,  for the rather general class of DMCs with positive $\eps$-dispersion, the third-order  term  is $\rho_n\le\frac{1}{2}\log n + O(1)$. 
We may thus dispense with the   assumption that $W$ is weakly input-symmetric \cite[Def.\ 9]{Pol10}. 

The typical  way~\cite{Strassen, PPV10, Pol10} to  upper bound    $M^*(W^n,\eps)$ is to first do the same for the maximum size of a constant-composition  code   under the maximum error probability formulation.  Such a bound can be proved using either the meta-converse \cite[Thm.\ 31]{PPV10} or tight bounds on the type-II error probability  in a simple binary hypothesis test \cite[Thm.\ 1.1]{Strassen}. By the type-counting lemma \cite[Lem.\ 2.2]{Csi97}, every length-$n$ block code can be partitioned into no more than $(n+1)^{|\cX|-1 }$ constant-composition subcodes. This leads to the rather conservative bound \cite[Eq.\ (4.29)]{Strassen} \cite[Eq.\ (279)]{PPV10}
\begin{equation}
\log M_{\max}^*(W^n,\eps)\le nC+\sqrt{nV_{\eps}} \Phi^{-1}(\eps)+\Big(|\cX|-\frac{1}{2} \Big)\log n +O(1).   \label{eqn:types_res}
\end{equation}
Subsequently,  by expurgating bad codewords (see \cite[Eqs.\ (284)-(286)]{PPV10}), we can conclude that the same upper bound holds for  $M^*(W^n,\eps)$.
We adopt a different approach for the proof of our main result in \eqref{eqn:main_res} and work with $M^*(W^n,\eps)$ directly.  In a nutshell, we consider a new ``symbol-wise'' relaxation of the meta-converse that allows us to work directly  with general (non-constant-composition) codes and the average probability of error. The one-shot converse is stated in terms of the {\em relative entropy information spectrum} \cite[Ch.~4]{Han10} but allows us to choose an auxiliary output distribution as in the meta-converse. We then
 carefully weigh the contributions of each input type for  a general code by constructing an appropriate $\epsilon$-net for the output probability simplex.  The last step, which replaces the use of the  type-counting lemma, is one of our main contributions and allows us to bound the effect of different  input types with the $O(1)$ term in~\eqref{eqn:main_res}. 

Note that unlike in \eqref{eqn:types_res}, the third-order term in our upper bound in \eqref{eqn:main_res} is independent of $|\cX|$. This is intuitively plausible due to the following observation. Let $n$ be a large even integer and consider using transmitting information across $n$ uses of a DMC $W:\cX\to\cY$. Clearly, the same amount of information can be transmitted through $\frac{n}{2}$ uses of the product channel $W^{2} :\cX^{\cp 2}\to\cY^{\cp 2}$, where $W^{2}(y,y'|x,x') :=W(y|x) W(y'|x')$. The capacity and the dispersion of $W^{2}$ are respectively twice the capacity and the dispersion of $W$  so the normal approximation terms for $n$ uses of $W$ and $\frac{n}{2}$ uses of $W^{2}$ are identical.  If the  coefficient of the third-order logarithmic term   {\em were} dependent on the size of the input alphabet, say via some function $g(|\cX|)$, then in the first case, $\rho_n =  g(|\cX|)\log n+O(1)$ while in the second case, $\rho_n = g(|\cX|^2)\log (\frac{n}{2})+O(1)= g(|\cX|^2)\log  n+O(1)$. Thus,  at least on an intuitive level, we expect that $g(|\cX|)$ is independent of $|\cX|$.

\section{Notation and Preliminaries}

\subsection{Discrete Memoryless Channels} \label{sec:prelims}

As mentioned in the Introduction, we consider \emph{discrete memoryless channels} (DMCs), which are characterized by two finite sets, the input alphabet $\cX$ and the output alphabet  $\cY$, and a stochastic matrix $W$, where $W(y|x)$ denotes the probability that the output $y \in \cY$ occurs given input $x \in \cX$. 
The set of probability distributions on $\cX$ is denoted $\cP(\cX)$.
For any probability distribution $P \in \cP(\cX)$, we denote by $P \cp W : (x,y) \mapsto P(x) W(y|x)$ the joint distribution of inputs and outputs of the channel, and by $PW : y \mapsto \sum_x P(x) W(y|x)$ its marginal on $\cY$. Finally, $W(\cdot|x)$ denotes the distribution on $\cY$ if the input is fixed to $x$.

Given two probability distributions $P, Q \in \cP(\cX)$, we call the random variable $\log \frac{P(X)}{Q(X)}$ where $X$ has distribution $P$ the \emph{log-likelihood ratio} of $P$ and $Q$. Its mean is the {\em relative entropy} 
\begin{align*}
  D(P \| Q) := \Exp_P \bigg[\log \frac{P}{Q}\bigg] = \sum_{x \in \cX} P(x) \log \frac{P(x)}{Q(x)} 
\end{align*}
and $D(W\|Q|P):=\sum_x P(x)D(W(\cdot|x)\|Q)$ is the {\em conditional  information divergence}. The {\em mutual information} is $I(P, W) := D( W \| PW|P)$. Moreover, 
\begin{align}
  C(W) := \max_{P \in \cP(\cX)} I(P, W) \qquad \textrm{and} \qquad 
  \Pi(W) := \{P \in \cP(\cX) \, |\, I(P, W)=C(W) \} \nonumber
\end{align}
are the \emph{capacity} and the set of \emph{capacity-achieving input distributions} (CAIDs), respectively.\footnote{We often drop the dependence on $W$ if it is clear from context.} The set of CAIDs is convex and compact in $\cP(\cX)$. 
The unique~\cite[Cor.\ 2 to Thm.\ 4.5.2]{gallagerIT}  \emph{capacity-achieving output distribution} (CAOD) is denoted as $Q^*$ and $Q^* = P W$ for all
$P \in \Pi$. Furthermore, it satisfies $Q^*(y) > 0$ for all $y \in \cY$ \cite[Cor.\ 1 to Thm.\ 4.5.2]{gallagerIT}, where we assume that all outputs are accessible.

The variance of the log-likelihood ratio of $P$ and $Q$ is the \emph{divergence variance}
\begin{align*}
  V(P \| Q) := \Exp_P \bigg[ \Big( \log \frac{P}{Q} - D(P \| Q) \Big)^2 \bigg] .
\end{align*}
We also define the {\em conditional divergence variance} $V( W \| Q | P) := \sum_x P(x) V( W(\cdot|x) \| Q)$ and   the {\em conditional information variance}
$V(P, W) := V(W \| PW | P)$. Note that $V(P, W) = V(P \cp W \| P \cp PW)$ for all $P \in \Pi$ \cite[Lem.\ 62]{PPV10}.  The $\eps$-\emph{channel dispersion}\footnote{Notice that for $\eps=\frac{1}{2}$, we set $V_\eps=V_{\max}$. This is somewhat unconventional; cf.~\cite[Thm.\ 48]{PPV10}. However, doing so ensures that  Theorem~\ref{th:main} can be  stated compactly. Nonetheless, from the viewpoint of the normal approximation,  it is immaterial how we choose $V_{\frac{1}{2}}$  since $\Phi^{-1}   (\frac{1}{2})=0$ (cf. \cite[after Eq.\ (280)]{PPV10}). }~\cite[Def.\ 2]{PPV10} is an operational quantity that was shown~\cite[Eq.\ (223)]{PPV10} to be equal to
\begin{align}
  V_{\eps}(W) := \begin{cases} V_{\min}  & \textrm{if } \eps < \frac{1}{2} \\ V_{\max} & \textrm{if } \eps \geq \frac{1}{2} \end{cases}, \quad \textrm{where} \quad 
  V_{\min} := \min_{P \in \Pi} V(P, W) \quad  \textrm{and} \quad V_{\max} := \max_{P \in \Pi} V(P, W) \,. \nonumber
\end{align}

Furthermore, a channel is called \emph{exotic}~\cite[before Thm.~48]{PPV10} if $V_{\max} = 0$ and there exists a symbol~$x_0  \in\cX$ such that $D( W(\cdot|x_0 ) \| Q^*) = C$ and $V( W(\cdot|x_0) \| Q^*) > 0$.\footnote{Note that this symbol must
satisfy $P(x_0) = 0$ for any $P \in \Pi$, as otherwise $V_{\max}$ would not vanish.}

For later reference, we also define the {\em  third absolute  moment of the log-likelihood ratio},
\begin{align*}
  T(P\|Q) := \Exp_P \bigg[ \Big| \log \frac{P}{Q} - D(P \| Q) \Big|^3 \bigg]
\end{align*}
and
$T(W\|Q|P) := \sum_x P(x) T( W(\cdot|x) | Q)$.
 
We employ the cumulative distribution function of the standard normal distribution 
$$\Phi(a) := \int_{-\infty}^a \frac{1}{\sqrt{2 \pi}} \exp \Big(-\frac12 x^2\Big) \,\mathrm{d}x$$ 
and define its inverse as $\Phi^{-1}(\eps) := \sup \{ a \in \mathbb{R} \,|\, \Phi(a) \leq \eps \}$, which evaluates to the usual inverse for $0 < \eps < 1$ and continuously extends to take values $\pm \infty$ outside that range.

For a sequence   $\vec{x} = (x_1, x_2, \ldots, x_n) \in \cX^{\times n}$, we denote by $P_{\vec{x}} \in \cP(\cX)$ the probability distribution given by the relative frequencies of 
$\vec{x}$, i.e.\ $P_{\vec{x}}(x) = \frac{1}{n} \sum_{i=1}^n 1 \{x_i=x\}$.  This probability distribution $P_{\vec{x}}$ is also known as the {\em empirical distribution} or the {\em type} \cite[Def.~2.1]{Csi97} of $\vec{x}$.
The set of all such distributions is denoted  as $\cP_n(\cX) = \bigcup_{\vec{x}} \big\{ P_{\vec{x}} \big\}$ and satisfies $| \cP_n(\cX) | \leq (n+1)^{|\cX|-1}$.

\subsection{Codes and $\eps$-Error Capacity}

A {\em code} $\cC$ for   a channel is defined by the triple $\{\cM, e, d\}$,
where $\cM$ is a set of messages, $e: \cM \to \cX$ an encoding function and $d: \cY \to \cM$ a decoding function. We write $\abs{\cC} = \abs{\cM}$ for the cardinality of the message set. We define the \emph{average error probability} of a code $\cC$ for the channel $W$  as 
$$
\perr(\cC,W):= P[M \neq M'] = 1-\frac{1}{ |\cM|}\sum_{m\in\cM}W(d^{-1}(m)|e(m))
$$ 
where the distribution over messages $P_M$ is assumed to be uniform on $\cM$, 
$$M \xrightarrow{\ e\ } X \xrightarrow{\ W\ } Y \xrightarrow{\ d\ } M'$$
forms a Markov chain, and $M'$ thus denotes output of the decoder. The \emph{one-shot $\eps$-error capacity} of the channel $W$ is then defined as
\begin{align*}
  M^*(W,\eps) := \max \big\{ m \in \mathbb{N} \,\big|\, \exists\, \cC :\ \abs{\cC} = m \ \land\ \perr(\cC, W) \leq \eps \big\} .
\end{align*}

We are also interested in the $\eps$-error capacity for $n \geq 1$ uses of a memoryless channel. For this purpose, we consider the channel $W^n:\cX^n\to\cY^n$, defined by the stochastic matrix 
$W^n(\vec{y}|\vec{x}) = \prod_{i=1}^n W(y_i|x_i)$, where $\vec{x} = (x_1, x_2, \ldots, x_n)$ and $\vec{y} =( y_1, y_2, \ldots, y_n)$ are strings of length $n$ of symbols $x_i \in \cX$ and $y_i \in \cY$, respectively. Then, the \emph{blocklength $n$, $\eps$-average error capacity} of the channel  $W$ is denoted as $M^*(W^n,\eps)$.

\begin{figure}
\centering
\begin{picture}(384,120)
\thicklines
\put(0, 75){\line(0,1){30}}
\put(40, 75){\line(0,1){30}}
\put(0, 75){\line(1,0){40}}
\put(0, 105){\line(1,0){40}}
\thinlines

\put(40, 90){\vector(4,3){40}}
\put(40, 90){\vector(4,-3){40}}
\put(50, 110){\footnotesize\mbox{Yes}}
\put(50, 67){\footnotesize\mbox{No}}
\put(7, 88){\footnotesize \mbox{$V_\eps>0$}}

\put(83,117){\footnotesize\mbox{$\le\! nC \!+\!\sqrt{nV_\eps}\Phi^{-1}(\eps)\! +\! \frac{1}{2}\log n \!+\! O(1)$ [Props.\ \ref{pr:reg} and  \ref{prop:epshalf}(i)] }}

\thicklines
\put(80, 45){\line(0,1){30}}
\put(120, 45){\line(0,1){30}}
\put(80, 45){\line(1,0){40}}
\put(80, 75){\line(1,0){40}}
\thinlines

\put(120, 60){\vector(4,3){40}}
\put(120, 60){\vector(4,-3){40}}
\put(130, 80){\footnotesize\mbox{Yes}}
\put(130, 37){\footnotesize\mbox{No}}
\put(81, 65){\footnotesize \mbox{not $\!\!\!$ exotic}}
\put(84, 55){\footnotesize \mbox{or $\eps\!<\!\frac{1}{2}$}}

\put(163,87){\footnotesize\mbox{$\le\! nC\! +\! O(1)$ [Prop.\ \ref{pr:special}] }}

\thicklines
\put(160, 15){\line(0,1){30}}
\put(200, 15){\line(0,1){30}}
\put(160, 15){\line(1,0){40}}
\put(160, 45){\line(1,0){40}}
\thinlines

\put(200, 30){\vector(4,3){40}}
\put(200, 30){\vector(4,-3){40}}
\put(210, 50){\footnotesize\mbox{Yes}}
\put(210, 7){\footnotesize\mbox{No}}
\put(167, 34){\footnotesize \mbox{exotic}}
\put(162, 24){\footnotesize \mbox{and $\!\!$  $\eps\!=\!\frac{1}{2}$}}

\put(243, 57){\footnotesize\mbox{$\le\! nC\! +\! \frac{1}{2}\log n\! +\! O(1)$ [Prop.\ \ref{prop:epshalf}(ii)] }}
\put(243, 0){\footnotesize\mbox{$\le\! nC\! +\!   O\big(n^{\frac{1}{3}}\big)$  \cite[Thm.\ 48]{PPV10}  }}
\end{picture}
\caption{Illustration of the various cases of Theorem~\ref{th:main} and  the proof structure in Section~\ref{sec:asymp}}
\label{fig:flowchart}
\end{figure}

\section{Main Result and Proof}

Let us reiterate our main result.  The various cases are illustrated diagrammatically in Fig.~\ref{fig:flowchart}.
\begin{theorem}
\label{th:main}
For every DMC $W$ and $\eps$ with $V_{\eps} > 0$, the  blocklength $n$, $\eps$-error capacity satisfies
\begin{equation}
\log M^*(W^n,\eps)\le nC+\sqrt{nV_{\eps}} \Phi^{-1}(\eps)+\frac{1}{2}\log n + O(1).  \nonumber
\end{equation}
If $V_{\eps} = 0$, we have $\log M^*(W^n,\eps)\le nC + O(1)$, unless the channel is exotic and   $\eps \ge  \frac{1}{2}$.
\end{theorem}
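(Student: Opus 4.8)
The plan is to bound $M^*(W^n,\eps)$ directly, never passing through constant-composition subcodes, by pairing a one-shot converse that leaves the auxiliary output distribution free with a carefully weighted net on the output simplex.

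First I would set up the one-shot converse. Relaxing the meta-converse through $\beta_\alpha(P\|R)\ge 2^{-\lambda}\big(\alpha - P[\tfrac{dP}{dR}>2^\lambda]\big)$ and observing that the resulting tail probability is affine in the code's input distribution (hence worst at a point mass $\vec{x}$, whose law depends only on the type $P_{\vec{x}}$), one obtains: for every $Q\in\cP(\cY^n)$ and every $n$, $\log M^*(W^n,\eps)\le \frac{1}{2}\log n + \sup_{P\in\cP_n(\cX)}\tau_P(Q)$, where $\tau_P(Q)$ denotes the $(\eps+n^{-1/2})$-quantile of $\log\frac{W^n(\vec{Y}|\vec{x})}{Q(\vec{Y})}$ for a codeword $\vec{x}$ of type $P$ and $\vec{Y}\sim W^n(\cdot|\vec{x})$. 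The $\frac{1}{2}\log n$ is thus present from the start and is alphabet-free; everything else must be pushed into the $O(1)$.

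For $V_\eps>0$ I would take $Q=\sum_j\lambda_j Q_j^{\otimes n}$, a mixture over a net $\{Q_j\}$ of $\cP(\cY)$ of resolution $\Theta(n^{-1/2})$ covering $\{PW : P\in\cP_n(\cX)\}$, with weights decaying geometrically in $\operatorname{dist}(Q_j,Q^*)$ (a point at distance $\asymp t$ getting weight $\asymp 2^{-cnt^2}$), plus a uniform catch-all $\tfrac{\delta_0}{\abs{\cP_n(\cX)}}\sum_P (PW)^{\otimes n}$ of constant total mass. For a codeword of type $P$, using the net point nearest $PW$ bounds $\log\frac{W^n(\vec{Y}|\vec{x})}{Q(\vec{Y})}$ by $-\log\lambda_j$ plus a sum of $n$ independent bounded terms of mean $\approx nI(P,W)\le nC$ and variance $\approx nV(P,W)$; a Berry--Esseen estimate, uniform over types via boundedness of $T(W\|Q_j|P)$ (and a separate estimate splitting the log-likelihood ratio into its deterministic part and its genuinely random part where $V(P,W)$ is too small for Berry--Esseen to apply), gives $\tau_P(Q)\le -\log\lambda_j + nI(P,W)+\sqrt{nV(P,W)}\,\Phi^{-1}(\eps)+O(1)$. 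Then comes the case split. If $\operatorname{dist}(P,\Pi)=O(n^{-1/2})$ one may use $Q^*$ itself, so $-\log\lambda_j=O(1)$, $nD(W\|Q^*|P)\le nC$, and $V(W\|Q^*|P)=V(P^*,W)+O(n^{-1/2})$ for $P^*$ the projection of $P$ onto $\Pi$, which lies within $O(n^{-1/2})$ of $[V_{\min},V_{\max}]$; since $\sqrt{n}$ absorbs this, $\sqrt{nV(W\|Q^*|P)}\,\Phi^{-1}(\eps)\le\sqrt{nV_\eps}\,\Phi^{-1}(\eps)+O(1)$. If $\operatorname{dist}(P,\Pi)=t$ is larger, the weight penalty $cnt^2$ is dominated by the first-order deficit $n(C-I(P,W))\asymp nt^2$ once the constants are chosen appropriately, and this deficit also swamps the $O(t)$ excess of $V(P,W)$ over $V_\eps$. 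And once $\operatorname{dist}(P,\Pi)$ exceeds a fixed threshold the deficit is $\Omega(n)$, so the catch-all's $O(\log n)$ penalty is harmless. Assembling, $\sup_P\tau_P(Q)\le nC+\sqrt{nV_\eps}\,\Phi^{-1}(\eps)+O(1)$.

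For $V_\eps=0$ I would instead take $Q=(Q^*)^{\otimes n}$ with a tiny uniform catch-all and use constant slack in the one-shot converse, so that no $\frac{1}{2}\log n$ appears. Splitting $\log\frac{W^n(\vec{Y}|\vec{x})}{(Q^*)^{\otimes n}(\vec{Y})}$ into a deterministic part over the symbols with $V(W(\cdot|x)\|Q^*)=0$ and a random part over the remaining $k=nP(\{x:V(W(\cdot|x)\|Q^*)>0\})$ symbols: if $V_{\max}=0$ and the channel is not exotic, every symbol with $D(W(\cdot|x)\|Q^*)=C$ has zero variance, so the random part has mean at most $k(C-\eta)$ for a fixed $\eta>0$ and a one-sided Chebyshev bound makes the relevant quantile $\le nC+O(1)$ for all types; if $\eps<\frac{1}{2}$ the negative backoff $\Phi^{-1}(\eps)<0$ pushes the random part's quantile below its mean, and the same decomposition (Berry--Esseen on the random part when $k$ is large, a trivial bound when $k$ is small) again yields $\le nC+O(1)$. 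The exotic case with $\eps\ge\frac{1}{2}$ is genuinely excluded, since there a symbol with $D(W(\cdot|x_0)\|Q^*)=C$ and positive variance can be used $\Theta(n^{2/3})$ times to beat $nC+O(1)$; at $\eps=\frac{1}{2}$ the symmetric fluctuation costs only $\frac{1}{2}\log n+O(1)$, and for $\eps>\frac{1}{2}$ one falls back on the $O(n^{1/3})$ bound of \PPV.

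The step I expect to be hardest is the construction and analysis of the weighted output-simplex net---the replacement for the type-counting lemma---and in particular the quadratic deficit estimate $C-I(P,W)\gtrsim\operatorname{dist}(P,\Pi)^2$ near the capacity-achieving set that underpins it, with due care at the boundary of the simplex and for directions in the kernel of the channel matrix, which one must show keep $P$ inside $\Pi$ so that transverse to $\Pi$ the decay really is quadratic. It is precisely the matching of this quadratic deficit against the geometrically decaying net weights that collapses the cost of handling all input types into the $O(1)$ term, and hence makes the coefficient of $\log n$ independent of $\abs{\cX}$.
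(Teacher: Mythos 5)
Your architecture is the same as the paper's: the symbol-wise information-spectrum relaxation of the meta-converse with slack $\delta=n^{-1/2}$ (Proposition~\ref{th:one-shot}), an output distribution built from an $n^{-1/2}$-resolution net on $\cP(\cY)$ with weights decaying like $\exp(-\Theta(n t^2))$ in the distance $t$ from $Q^*$ plus a uniform mixture of $(P_{\vec{x}}W)^{\times n}$ over types (Section~\ref{sc:output}), the balance of the quadratic deficit $C-I(P,W)\ge\alpha\twonorm{P-P^*}^2$ against the net weights, the catch-all for distant types, and, for $V_\eps=0$, the $(Q^*)^{\times n}$ choice with the split into zero-variance and positive-variance symbols, Berry--Esseen or Chebyshev according to $m(\vec{x})$, and the non-exotic gap $\psi>0$ when $\eps\ge\tfrac12$. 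That part of your plan is sound and matches Propositions~\ref{pr:reg} and~\ref{pr:special}.

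The step that does not go through as written is the corner $V_{\min}=0$, $V_{\max}=V_\eps>0$, $\eps\ge\tfrac12$, which is part of the theorem's first claim. Your uniform intermediate estimate $\tau_P(Q)\le-\log\lambda_j+nI(P,W)+\sqrt{nV(P,W)}\,\Phi^{-1}(\eps)+O(1)$ is not available for types near a zero-variance CAID: there $V(P,W)$ can be $o(1)$, the Berry--Esseen correction $6T_n/\sqrt{nV_n^3}$ is unbounded, the Chebyshev alternative only gives mean plus $\Theta(\sqrt{nV_n})$ rather than $+O(1)$, and because $t\mapsto\sqrt{t}$ is not Lipschitz at $0$, the $O(t)$ discrepancy between $V(W\|Q_j|P)$ (the variance that actually enters) and $V(P,W)$ can inflate the square-root term by $\Theta(\sqrt{nt})$, which the deficit $\alpha n t^2$ fails to absorb when $t\ll n^{-1/3}$. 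Your parenthetical ``deterministic/random splitting'' is the right tool, but the clean per-symbol dichotomy (conditional variance either $0$ or $\ge v_{\min}$) holds only with respect to $Q^*$, not a generic net point $Q_j$, so it does not rescue the estimate in the form you state it. The repair---and what the paper does---is to treat this region by a separate argument that never routes through $\sqrt{nV(P,W)}$: for $\eps>\tfrac12$, use the type-mixture part with the Chebyshev bound and absorb both $\sqrt{nv/(1-\eps)}$ and the $O(\log n)$ type-counting penalty into the strictly positive term $\sqrt{nV_{\max}}\,\Phi^{-1}(\eps)$ (Case c) of Proposition~\ref{pr:reg}); for $\eps=\tfrac12$, use $Q^*$, the $m(\vec{x})$-split, and the bound $\Phi^{-1}(\tfrac12+r)\le 3r$, comparing $V(W\|Q^*|P)$ directly to $V_{\max}>0$ (Proposition~\ref{prop:epshalf}(i)). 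With that corner handled separately, your proposal coincides with the paper's proof.
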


\begin{remark} 
The $\eps=\frac{1}{2}$ case needs to be treated with care.  
For all DMCs $W$ with $V_{\min } = 0$ and $\eps=\frac{1}{2}$ (this includes exotic DMCs), we show that  $\log M^*(W^n,\eps)\le nC + \frac{1}{2}\log n+ O(1)$. See Proposition~\ref{prop:epshalf}. If $V_{\max}>0$, this statement  concurs with the positive  $\eps$-dispersion case of Theorem~\ref{th:main}.
\end{remark}

\begin{remark}
From the preceding statements, we see that for DMCs  with $V_{\min} = 0$ and $V_{\max} > 0$, the third-order term ``jumps'' from $0$ to $\frac{1}{2} \log n$ when $\eps\uparrow\frac{1}{2}$.  This is possible because we do not investigate the dependence of the constant term on $\eps$.\footnote{Indeed, in our proof for the case $V_{\min}=0$, $V_{\max} > 0$ and $\eps= \big(\frac{1}{2}\big)^-$ in~Proposition~\ref{pr:special}, we notice that the constant term diverges as $\eps\uparrow \frac{1}{2}$.}
\end{remark}

In light of the existing results on $\rho_n$ (in the Introduction and \cite[Sec.\ 3.4.5]{Pol10}), the third-order term is the best possible unless we impose further assumptions on $W$. More precisely, it was shown in \cite[Cor.~54]{Pol10} that if there exists a $P \in \Pi(W)$ achieving $V_\eps(W)$ such that the {\em reverse conditional information variance} is positive, i.e.\ $V^{\mathrm{r}}(P,W) := V\big(PW, \frac{P \times W}{PW}\big)>0$, then 
\begin{align*}
\log M^*(W^n,\eps)\ge nC+\sqrt{nV_{\eps}} \Phi^{-1}(\eps)+\frac{1}{2}\log n + O(1).
\end{align*}
This matches the upper bound of Theorem~\ref{th:main}.

The proof consists of five parts, each detailed in one of the following subsections. In the first subsection, we introduce two entropic quantities, the hypothesis testing divergence~\cite{Wang09, WangRenner, Tom12, Dupuis12} and a quantity related to the information (or divergence) spectrum~\cite[Ch.\ 4]{Han10}. We state and prove some useful and well-known properties that we need later. In the second subsection, we derive a converse bound, valid for general DMCs, that involves a minimization over output distributions and maximization over input symbols. In the third subsection, we choose an appropriate output distribution for use in the general converse bound. In the fourth subsection, we state and prove some continuity properties of information measures around the CAIDs and the unique CAOD. Finally, the fifth subsection contains the proof of our main result.

\subsection{Hypothesis Testing and the Information Spectrum}

We use the following divergence~\cite{Wang09, Tom12,WangRenner, Dupuis12}, which is closely related to binary hypothesis testing. Let $\eps \in (0,1)$ and let $P, Q \in \cP(\cZ)$, where $\cZ$ is finite. We consider binary (probabilistic) hypothesis tests $\xi : \cZ \to [0,1]$ and define the \emph{$\eps$-hypothesis testing divergence}
\begin{align*}
 D_h^{\eps}(P \| Q) := \sup \Big\{ R \in \mathbb{R} \,\Big|\, \exists\ \xi :\ 
   \Exp_{Q} \big[\xi(Z)\big] \leq (1-\eps) \exp(-R)\ \land\ \Exp_{P} \big[\xi(Z)\big] \geq 1-\eps  \Big\}.
\end{align*}
Note that $D_h^{\eps}(P \| Q)=-\log \frac{\beta_{1-\eps}(P,Q)}{1-\eps}$  where $\beta_{1-\eps}(P,Q)$ is the smallest type-II error of a hypothesis test between $P$ and $Q$ with type-I error smaller than $\eps$ and is defined formally in~\cite[Eq.\ (100)]{PPV10}. It is easy to see that $D_h^{\eps}(P\|Q)\geq 0$, where the lower bound is achieved if and only if $P = Q$ and $D_h^{\eps}(P \| Q)$ diverges if $P$ and $Q$ are orthogonal. It satisfies a data-processing inequality~\cite{Wang09}
\begin{align*}
   D_h^{\eps}(P \| Q) \geq D_h^{\eps}(PW \| QW) 
   \qquad \textrm{for all channels $W$ from $\cZ$ to $\cZ'$} .
\end{align*}
When evaluated for independent and identical distributions (i.i.d.), its asymptotic expansion in the first order is determined by the Chernoff-Stein Lemma~\cite[Cor.\ 1.2]{Csi97}, yielding
$D_h^{\eps}(P^{\times n}\|Q^{\times n}) = n D(P\|Q) + o(n)$ for any $\eps \in (0,1)$. This asymptotic expansion was subsequently tightened by Juschkewitsch~\cite{Jus53} among others. Finally Strassen~\cite[Thm.\ 3.1]{Strassen} found an expansion including the third-order term as
\begin{align*}
  D_h^{\eps}(P^{\times n} \| Q^{\times n}) = n D(P\|Q) + \sqrt{n V(P\|Q)} \Phi^{-1}(\eps) + \frac{1}{2} \log n + O(1) .
\end{align*}
%

The following quantity, which characterizes the distribution of the log-likelihood ratio and is
known as the {\em relative entropy information spectrum} or the {\em divergence spectrum}  \cite[Ch.~4]{Han10}, is sometimes easier to manipulate and evaluate.
  \begin{align*}
    D_s^{\eps}(P \| Q) := \sup \bigg\{ R \in \mathbb{R} \,\bigg|\, 
    P \Big[ \log \frac{P}{Q} \leq R \Big] \leq \eps \bigg\} .
  \end{align*}
It is intimately related to the $\eps$-hypothesis testing divergence.
\begin{lemma} 
  \label{lm:h-s}
For any $\delta \in (0, 1-\eps)$, we have
\begin{align}
  \label{eq:Ds-Dh}
   D_h^{\eps}(P \| Q) \leq D_s^{\eps+\delta}(P \| Q) + \log \frac{1-\eps}{\delta} .
\end{align}
\end{lemma}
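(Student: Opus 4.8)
The plan is to relate the two quantities by a direct argument about the optimal hypothesis test. Recall that $D_h^\eps(P\|Q) = -\log\frac{\beta_{1-\eps}(P,Q)}{1-\eps}$, so it suffices to lower bound $\beta_{1-\eps}(P,Q)$ in terms of $D_s^{\eps+\delta}(P\|Q)$. Write $R := D_s^{\eps+\delta}(P\|Q)$; by definition of the divergence spectrum, $P\big[\log\frac{P}{Q}\le R\big] > \eps+\delta$ is false for $R$ strictly larger, but at the supremum we at least have $P\big[\log\frac{P}{Q} < R'\big] \le \eps+\delta$ for all $R' \le R$ — I would be slightly careful here about strict versus non-strict inequalities and the $\sup$, possibly passing to $R-\gamma$ for arbitrarily small $\gamma>0$ and taking a limit at the end, or arguing that the set defining the $\sup$ is such that $P[\log\frac{P}{Q}\le R] \le \eps+\delta$ holds at $R$ itself by right-continuity of the CDF.

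The core step is the standard Neyman–Pearson / change-of-measure estimate. Let $\xi:\cZ\to[0,1]$ be any test with $\Exp_P[\xi]\ge 1-\eps$. Consider the ``bad'' set $\mathcal{A} := \{z : \log\frac{P(z)}{Q(z)} \le R\}$, for which $P[\mathcal{A}] \le \eps+\delta$. Then
\begin{align*}
\Exp_P[\xi \mathbf{1}_{\mathcal{A}^c}] \ge \Exp_P[\xi] - P[\mathcal{A}] \ge (1-\eps) - (\eps+\delta)
\end{align*}
is not quite what I want; instead I would bound $\Exp_P[\xi\mathbf{1}_{\mathcal{A}^c}] \ge \Exp_P[\xi] - P[\mathcal{A}]$... wait, that gives $1-\eps-(\eps+\delta)$ which can be negative. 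The right bookkeeping is: $\Exp_P[\xi] \le \Exp_P[\xi\mathbf{1}_{\mathcal{A}}] + \Exp_P[\xi\mathbf{1}_{\mathcal{A}^c}] \le \Exp_P[\mathbf{1}_{\mathcal{A}}] + \Exp_P[\xi\mathbf{1}_{\mathcal{A}^c}]$, hence $\Exp_P[\xi\mathbf{1}_{\mathcal{A}^c}] \ge (1-\eps) - (\eps+\delta)$, which is still potentially negative. So that split is not the one to use. Instead, the standard trick: on $\mathcal{A}^c$ we have $P(z) > e^R Q(z)$, hence
\begin{align*}
\Exp_Q[\xi] \ge \Exp_Q[\xi\mathbf{1}_{\mathcal{A}^c}] = \sum_{z\in\mathcal{A}^c}\xi(z)Q(z) \le e^{-R}\sum_{z\in\mathcal{A}^c}\xi(z)P(z) = e^{-R}\Exp_P[\xi\mathbf{1}_{\mathcal{A}^c}],
\end{align*}
and now I do need a lower bound on $\Exp_P[\xi\mathbf{1}_{\mathcal{A}^c}]$, which is exactly the quantity that the two-part split failed to control. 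The fix is to not use $\delta$ to absorb $P[\mathcal{A}]$ directly but rather to note that any feasible test for $\beta_{1-\eps}$ has $\Exp_P[\xi] \ge 1-\eps$, and $P[\mathcal{A}]\le \eps+\delta$ gives $\Exp_P[\xi\mathbf{1}_{\mathcal{A}^c}] \ge (1-\eps) - (\eps+\delta)$ only when $1-\eps > \eps+\delta$; since $\delta < 1-2\eps$ is \emph{not} assumed (only $\delta<1-\eps$), this approach needs $\eps$ small. Therefore I expect the actual argument in the paper multiplies the test, or chooses the threshold relative to $1-\eps$ rather than $\eps$; more precisely, with $\delta\in(0,1-\eps)$ one takes $\mathcal{A}$ with $P[\mathcal{A}^c]\ge 1-\eps$... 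Let me restate: the clean route is $\Exp_P[\xi\mathbf{1}_{\mathcal{A}^c}] \ge \Exp_P[\xi] + P[\mathcal{A}^c] - 1 \ge (1-\eps) + (1-\eps-\delta) - 1 = 1-2\eps-\delta$, which requires $\delta < 1-2\eps$. Hmm — but the lemma only requires $\delta<1-\eps$, so the bound $\log\frac{1-\eps}{\delta}$ and the precise definition of $D_s$ at level $\eps+\delta$ (not $2\eps+\delta$) must conspire differently.

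The resolution, which I would pursue carefully, is that the relevant event should be chosen so that $P[\mathcal{A}] \le \delta$ directly by choosing $R = D_s^{\eps+\delta}$ but intersecting with the acceptance region of the \emph{optimal} Neyman–Pearson test at level $\eps$: the NP test $\xi^*$ is a threshold test on $\log\frac{P}{Q}$, so $\{\xi^* = 1\}$ and $\{\log\frac{P}{Q}\le R\}$ interact monotonically. Concretely, the optimal $\xi^*$ accepts when $\log\frac{P}{Q}$ exceeds some threshold $t$; if $t \ge R$ then $\Exp_Q[\xi^*] \le e^{-t}\Exp_P[\xi^*] \le e^{-R}\cdot 1$, giving $\beta_{1-\eps} \le e^{-R}$ hence $D_h^\eps \le R + \log\frac{1}{1-\eps} \le R + \log\frac{1-\eps}{\delta}$ (the last step since $\delta<1-\eps$ makes $\log\frac{1-\eps}{\delta}>0 > \log\frac{1}{1-\eps}$... no, $\log\frac1{1-\eps}>0$ too; need $\frac1{1-\eps}\le\frac{1-\eps}{\delta}$ i.e. $\delta\le(1-\eps)^2$, not always true). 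And if $t < R$, then $P[\log\frac{P}{Q} \le R]$ includes part of the acceptance region; since $\Exp_P[\xi^*]\ge 1-\eps$ and the rejection region $\{\log\frac{P}{Q}<t\}\subseteq\{\log\frac{P}{Q}\le R\}$ has $P$-mass $\le \eps$, while $\{\log\frac PQ\le R\}$ has $P$-mass $\le\eps+\delta$, the \emph{accepted} part with $\log\frac{P}{Q}\le R$ has $P$-mass $\le\delta$; on this part $Q$-mass is at most... this is getting intricate. The honest summary: the main obstacle is the careful Neyman–Pearson bookkeeping to land exactly the constant $\log\frac{1-\eps}{\delta}$; I would handle it by writing the optimal test explicitly as a (possibly randomized) threshold test on the log-likelihood ratio at the threshold $R' := D_s^{\eps+\delta}(P\|Q)$, bounding its type-I error below by considering how much $P$-mass at or below $R'$ can be ``accepted'', splitting $\Exp_Q[\xi]$ over $\{\log\frac PQ > R'\}$ (contributing $\le e^{-R'}\cdot\Exp_P[\cdot]\le e^{-R'}$) and $\{\log\frac PQ\le R'\}$ (contributing at most $\delta\cdot$ something that after normalizing by $1-\eps$ yields the logarithmic term), and finally invoking $D_h^\eps(P\|Q)=-\log\frac{\beta_{1-\eps}(P,Q)}{1-\eps}$ to convert the bound on $\beta_{1-\eps}$ into the claimed inequality. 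A limiting argument in $R'\to D_s^{\eps+\delta}$ from below cleans up the $\sup$.
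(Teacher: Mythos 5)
There is a genuine gap: you never reach the single change-of-measure step that proves the lemma, and the routes you try are either blocked or contain direction errors. The decisive misstep is performing the change of measure on the large-likelihood-ratio set $\mathcal{A}^c$, which forces you to lower-bound $\Exp_P[\xi\mathbf{1}_{\mathcal{A}^c}]$; as you yourself observe, that only works under the extraneous condition $\delta<1-2\eps$. Two further slips: (i) an upper bound $\beta_{1-\eps}(P,Q)\le \exp(-R)$ gives a \emph{lower} bound on $D_h^{\eps}(P\|Q)=\log\frac{1-\eps}{\beta_{1-\eps}(P,Q)}$, not the upper bound you claim, so the ``$t\ge R$'' branch of your Neyman--Pearson case analysis proves nothing toward the lemma; (ii) at the supremum defining $D_s^{\eps+\delta}$, right-continuity of the CDF does \emph{not} give $P[\log\frac{P}{Q}\le R]\le\eps+\delta$; the fact actually needed is the opposite one, namely that $P[\log\frac{P}{Q}\le R+\gamma]>\eps+\delta$ for every $\gamma>0$, i.e.\ a \emph{lower} bound on the $P$-mass of the small-likelihood-ratio set.

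The standard argument (which the paper invokes by citation to \cite{PPV10} and \cite{Tom12} rather than spelling out) runs as follows. Fix $\gamma>0$, set $R':=D_s^{\eps+\delta}(P\|Q)$ and $\mathcal{A}:=\{z\,:\,\log\frac{P(z)}{Q(z)}\le R'+\gamma\}$, so that $P[\mathcal{A}]>\eps+\delta$ by definition of the supremum. For any test $\xi$ with $\Exp_P[\xi]\ge 1-\eps$, the type-I constraint gives $\Exp_P[\xi\mathbf{1}_{\mathcal{A}}]\ge P[\mathcal{A}]-\Exp_P[1-\xi]>\delta$, and on $\mathcal{A}$ one has $Q(z)\ge \exp(-(R'+\gamma))P(z)$, hence $\Exp_Q[\xi]\ge\Exp_Q[\xi\mathbf{1}_{\mathcal{A}}]\ge \exp(-(R'+\gamma))\,\Exp_P[\xi\mathbf{1}_{\mathcal{A}}]>\delta \exp(-(R'+\gamma))$. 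Since this holds for every feasible test, $\beta_{1-\eps}(P,Q)\ge\delta\exp(-(R'+\gamma))$, whence $D_h^{\eps}(P\|Q)\le R'+\gamma+\log\frac{1-\eps}{\delta}$, and letting $\gamma\downarrow 0$ gives the claim; the $(1-\eps)$ in the numerator is exactly the normalization in the definition of $D_h^{\eps}$. Note that no Neyman--Pearson structure of the optimal test is required---the bound holds for \emph{every} feasible $\xi$, which is precisely what removes the case analysis on the optimal threshold that stalled your attempt. Your proposal correctly identifies the target ($\beta_{1-\eps}\ge\delta\exp(-R')$, up to the limiting argument) but does not establish it, so as written it is not a proof.
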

\noindent  
This relation follows from standard arguments relating binary hypothesis testing and the log-likelihood test to the relative entropy information spectrum. 
See, for example~\cite[Eq.~(102) and Eqs.~(158)-(159)]{PPV10} where this is used to relax the meta-converse to (a generalization of) the Verd\'u-Han information spectrum converse~\cite[Lem.~3.2.2]{Han10} or~\cite[Lem.\ 12]{Tom12}, where an analogue of the above lemma is shown for the strictly more general non-commutative case.

We can give an upper bound on $D_s^{\eps}(P\|Q)$ if $Q$ is a convex combination of distributions.
\begin{lemma}
\label{lm:convex}
  Let $P \in \cP(\cZ)$ and $Q = \sum_{i \in \cI} q(i) Q^i$ with $Q^i \in \cP(\cZ)$ and $q \in \cP(\cI)$ and $\cI$ is some countable index set. Then,
\begin{align*} 
  D_s^{\eps}(P \| Q) \leq \inf \big\{ D_s^{\eps}(P \| Q^i) - \log q(i)  \big\}_{i \in \cI}
\end{align*}
\end{lemma}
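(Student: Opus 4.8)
The plan is to fix an arbitrary index $i \in \cI$ and show that $D_s^{\eps}(P \| Q) \leq D_s^{\eps}(P \| Q^i) - \log q(i)$; taking the infimum over $i$ then gives the claim. First I would observe that, since $q(i) Q^i(z) \leq Q(z)$ for every $z \in \cZ$ (all the other terms in the sum are nonnegative), we have the pointwise inequality
\begin{align*}
  \log \frac{P(z)}{Q(z)} \leq \log \frac{P(z)}{q(i) Q^i(z)} = \log \frac{P(z)}{Q^i(z)} - \log q(i)
\end{align*}
for every $z$ in the support of $P$ (and with the usual conventions where $P(z)=0$). Consequently the event $\{\log \frac{P}{Q^i} - \log q(i) \leq R\}$ is contained in the event $\{\log \frac{P}{Q} \leq R\}$, which immediately yields, for any real $R$,
\begin{align*}
  P\Big[ \log \tfrac{P}{Q^i} \leq R + \log q(i) \Big] \leq P\Big[ \log \tfrac{P}{Q} \leq R \Big].
\end{align*}

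Next I would unwind the definition of $D_s^\eps$. Let $R^* := D_s^{\eps}(P \| Q^i)$. By definition of the supremum, for any $R < R^*$ we have $P[\log \frac{P}{Q^i} \leq R] \leq \eps$; equivalently, setting $R = R^* - \log q(i) - \eta$ for small $\eta > 0$ and using the inclusion above, $P[\log \frac{P}{Q} \leq R^* - \log q(i) - \eta] \leq P[\log \frac{P}{Q^i} \leq R^* - \eta] \leq \eps$. Hence every value strictly below $R^* - \log q(i)$ is an admissible value of $R$ in the defining set of $D_s^{\eps}(P \| Q)$, so $D_s^{\eps}(P \| Q) \geq \dots$—wait, I want the opposite direction, so I would instead argue via the $\sup$ directly: any $R$ in the defining set of $D_s^{\eps}(P\|Q)$ satisfies $P[\log\frac{P}{Q}\le R]\le\eps$, but this does \emph{not} bound $R$ by $R^*-\log q(i)$ in general, so the cleanest route is the contrapositive on the complementary event. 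Specifically, if $R > R^* = D_s^{\eps}(P\|Q^i)$, then by definition $P[\log\frac{P}{Q^i} \leq R] > \eps$, hence $P[\log\frac{P}{Q^i} \leq R + \log q(i)] \geq$ — no; since $\log q(i) \leq 0$, shrinking the threshold only decreases the probability. The right bookkeeping is: take any $R$ with $P[\log\frac{P}{Q}\le R]\le\eps$; then by the inclusion $P[\log\frac{P}{Q^i}\le R+\log q(i)]\le\eps$, so $R+\log q(i)$ is an admissible threshold for $D_s^\eps(P\|Q^i)$, giving $R+\log q(i)\le D_s^\eps(P\|Q^i)$, i.e.\ $R\le D_s^\eps(P\|Q^i)-\log q(i)$. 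Taking the supremum over all such $R$ yields exactly $D_s^\eps(P\|Q)\le D_s^\eps(P\|Q^i)-\log q(i)$.

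The only genuinely delicate point — the ``main obstacle,'' such as it is — is handling the boundary conventions: the definition of $D_s^\eps$ is a supremum over a set that could be empty (giving $-\infty$) or unbounded above (giving $+\infty$), and one must be careful that $R+\log q(i)$ being admissible is interpreted correctly when $q(i)=0$ (then $-\log q(i)=+\infty$ and the bound is vacuous, so we may assume $q(i)>0$) and when some $Q^i(z)=0$ on the support of $P$ (then $D_s^\eps(P\|Q^i)$ may be $+\infty$ and again the bound is vacuous). Modulo these routine case checks, the proof is the one-line pointwise domination $q(i)Q^i \leq Q$ pushed through the definition, so I would keep the write-up short and state these conventions explicitly rather than belabor them.
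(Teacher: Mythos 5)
Your argument is correct and coincides with the paper's own proof: both rest on the pointwise domination $q(i)Q^i(z)\le Q(z)$, the resulting inclusion of events $\{\log\frac{P}{Q^i}\le R+\log q(i)\}\subseteq\{\log\frac{P}{Q}\le R\}$, and unwinding the definition of $D_s^{\eps}$ before taking the infimum over $i$. The false start in the middle is self-corrected and the final bookkeeping (any admissible $R$ for $D_s^{\eps}(P\|Q)$ makes $R+\log q(i)$ admissible for $D_s^{\eps}(P\|Q^i)$) is exactly the paper's step, so only the write-up would need tightening.
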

\begin{proof}
  Note that for all $z \in \cZ$ with $P(z) > 0$, for all $i \in \cI$, we have
  \begin{align*}
    \log \frac{P(z)}{Q(z)} = \log \frac{P(z)}{\sum_j q(j) Q^j(z)} \leq \log \frac{P(z)}{q(i) Q^i(z)}
    = \log \frac{P(z)}{Q^i(z)} - \log q(i) .
  \end{align*}
  Hence,
  \begin{align*}
    P \bigg[ \log \frac{P}{Q} \leq R \bigg] \geq  P \bigg[ \log \frac{P}{Q^i} \leq R + \log q(i) \bigg] 
  \end{align*}
  and thus we find $D_s^{\eps}(P\|Q) \leq D_s^{\eps}(P\|Q^i) - \log q(i)$ for any $i \in \cI$ as desired.
\end{proof}

The following standard result will be particularly useful, as it allows us to bound
the log-likelihood ratio of the input-output behavior of two channels in terms of
the log-likelihood ratio evaluated for a single input symbol.
\begin{lemma}
  \label{lm:channels}
  Let $P \in \cP(\cX)$ and let $V,\, W$ be channels from $\cX$ to $\cY$. Then,
  \begin{align*}
    D_s^{\eps}(P \cp W\|P \cp V) \leq \sup_{x:\, P(x) > 0}\ D_s^{\eps}( W(\cdot|x) \| V(\cdot|x) ) .
  \end{align*}
\end{lemma}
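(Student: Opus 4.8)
The plan is to prove Lemma~\ref{lm:channels} by directly comparing the log-likelihood ratios on $\cZ = \cX \times \cY$ and relating the probability that the joint ratio falls below $R$ to the corresponding conditional probabilities. First I would observe that for any pair $(x,y)$ with $(P \cp W)(x,y) > 0$ (equivalently $P(x) > 0$ and $W(y|x) > 0$) we have
\begin{align*}
  \log \frac{(P \cp W)(x,y)}{(P \cp V)(x,y)} = \log \frac{P(x) W(y|x)}{P(x) V(y|x)} = \log \frac{W(y|x)}{V(y|x)},
\end{align*}
so the joint log-likelihood ratio is \emph{exactly} the conditional one evaluated at the realized input symbol $x$; the $P(x)$ factors cancel. (One should note in passing that if $V(y|x) = 0$ while $W(y|x) > 0$ the ratio is $+\infty$, which only helps, and such points contribute nothing to the event $\{\,\cdot \le R\,\}$.)

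Next I would lower-bound the defining probability. Writing $Z = (X,Y) \sim P \cp W$ and conditioning on $X$,
\begin{align*}
  (P \cp W)\!\left[ \log \frac{P \cp W}{P \cp V} \le R \right]
  = \sum_{x:\, P(x) > 0} P(x)\, W(\cdot|x)\!\left[ \log \frac{W(\cdot|x)}{V(\cdot|x)} \le R \right]
  \ge \min_{x:\, P(x) > 0}\, W(\cdot|x)\!\left[ \log \frac{W(\cdot|x)}{V(\cdot|x)} \le R \right],
\end{align*}
since a convex combination is at least its smallest term. Now fix $R^* := \sup_{x:\,P(x)>0} D_s^{\eps}(W(\cdot|x)\|V(\cdot|x))$ and let $R > R^*$ be arbitrary. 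Then $R > D_s^{\eps}(W(\cdot|x)\|V(\cdot|x))$ for every $x$ with $P(x)>0$, which by the definition of $D_s^\eps$ (as a supremum) forces $W(\cdot|x)[\log \frac{W(\cdot|x)}{V(\cdot|x)} \le R] > \eps$ for each such $x$. Combining with the displayed inequality gives $(P \cp W)[\log \frac{P\cp W}{P \cp V} \le R] > \eps$, hence $R \notin \{R' : (P\cp W)[\cdots \le R'] \le \eps\}$, so $D_s^\eps(P \cp W \| P \cp V) \le R$. Letting $R \downarrow R^*$ yields the claim.

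The only subtlety — and the place to be slightly careful rather than a genuine obstacle — is the handling of the supremum in the definition of $D_s^\eps$: strictly, $R > D_s^\eps(\mu\|\nu)$ does not immediately give $\mu[\log\frac{\mu}{\nu}\le R] > \eps$ unless one argues via the supremum, and the monotonicity of $R \mapsto \mu[\log\frac{\mu}{\nu}\le R]$ must be invoked to pass to the limit $R \downarrow R^*$ cleanly; since $\cX$ is finite, the supremum over $x$ is a maximum and there is no issue in taking a single $R$ that exceeds all the finitely many conditional divergences simultaneously. (If one preferred, the whole statement also follows formally from Lemma~\ref{lm:convex} applied with the degenerate mixture indexed by $x$ together with the channel data-processing structure, but the direct conditioning argument above is cleaner.) I expect no real difficulty; the lemma is essentially the observation that the $P(x)$ weights cancel in the ratio and a convex combination dominates its minimum.
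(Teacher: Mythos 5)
Your proof is correct and is essentially the paper's argument in contrapositive form: both rest on the cancellation of the $P(x)$ factors in the log-likelihood ratio and on comparing the mixture probability $\sum_x P(x)\,W[\cdot\,|\,x]$ with its extreme term over $x$ (the paper takes $R$ slightly below the left-hand side and extracts one symbol with small conditional probability, while you take $R$ slightly above the right-hand side and show all conditional probabilities exceed $\eps$). The subtleties you flag (monotonicity of the CDF, finiteness of $\cX$, points with $V(y|x)=0$) are handled correctly, so no changes are needed.
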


\begin{proof}
We first note that the log-likelihood ratio takes on the form
\begin{align*}
  \log \frac{P \cp W}{P \cp V} :\ (x,y)\ \mapsto\ \log \frac{P(x)W(y|x)}{P(x)V(y|x)} 
  = \log \frac{W(y|x)}{V(y|x)} ,
\end{align*}
for every $(x,y) \in\cX\times\cY$ satisfying $P(x)>0$. 
Now, we may write
\begin{align*}
  R^* &= D_s^{\eps}(P \cp W\|P \cp V) = \sup \bigg\{ R \in \mathbb{R} \,\bigg|\, 
    P \Big[ \log \frac{P \cp W}{P \cp V} \leq R \Big] \leq \eps \bigg\} \\
    &= \sup \bigg\{ R \in \mathbb{R} \,\bigg|\, 
    \sum_{x : P(x)>0} P(x)\, W \Big[ \Big\{ y\,\big|\, \log \frac{W(y|x)}{V(y|x)} \leq R  \Big\}\,\Big|\,   x \Big] \leq \eps \bigg\} .
\end{align*}
Inspecting this expression, for any $\varphi > 0$, we find at least one $x^* \in \mathcal{X}$ such that 
\begin{align*}
P(x^*) > 0 \quad \textrm{and} \quad 
 W \Big[ \Big\{ y\,\big|\, \log \frac{W(y|x)}{V(y|x)} \leq R  \Big\}\,\Big|\,   x \Big] \leq    \eps\, .
\end{align*}
Hence, $D_s^{\eps}(W(\cdot|x^*)\|V(\cdot|x^*)) \geq R^* - \varphi$, which implies the lemma as $\varphi$ is arbitrary.
\end{proof}

The distribution of the log-likelihood ratio has the following asymptotic expansions for not necessarily identical product distributions. The bounds follow from   simple applications of the Berry-Essen theorem~\cite[Sec.\ XVI.5]{feller} and Chebyshev's inequality.
\begin{lemma}
  \label{lm:Ds-converse}
  Let $P_i, Q \in \cP(\cZ)$ be such that $Q$ dominates $P_i$ for all $i$ in some finite set $\mathcal{I}$. We consider a sequence of distributions $P_{i_k}$ indexed by  $(i_1, i_2, \ldots, i_n )$ where $i_k\in\cI$ for each $1\le k\le n$. Define
  \begin{align*}
   D_n := \frac{1}{n} \sum_{k=1}^n D(P_{i_k} \| Q), \,\,\,
   V_n := \frac{1}{n} \sum_{k=1}^n V(P_{i_k} \| Q), \,\,\, \textrm{and} \,\,\,\,
   T_n  := \frac{1}{n}\sum_{k=1}^n T(P_{i_k} \| Q)\, .
  \end{align*}
  If $V_n > 0$, then we have the Berry-Esseen-type bound
  \begin{align*}
    D_s^{\eps} \big( P_{i_1} \cp \ldots  P_{i_n} \big\| Q^{\times n}\big) 
    \leq n D_n + \sqrt{n V_n} \Phi^{-1}\bigg( \eps + \frac{6 \, T_n}{\sqrt{n V_n^3}} \bigg)  .
  \end{align*}
  In any case, we have the Chebyshev-type bound
  \begin{align}
    D_s^{\eps} \big( P_{i_1} \cp \ldots  P_{i_n} \big\| Q^{\times n}\big) \leq n D_n + \sqrt{\frac{n V_n}{1-\eps}} .  \label{eq:cheby}
  \end{align}
\end{lemma}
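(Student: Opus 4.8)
The plan is to unpack the definition of $D_s^{\eps}$ and reduce both bounds to a tail estimate on the sum of independent (but non-identical) random variables $S_n := \sum_{k=1}^n \log \frac{P_{i_k}(Z_k)}{Q(Z_k)}$, where the $Z_k$ are independent with $Z_k \sim P_{i_k}$. By the product structure of the distributions, the log-likelihood ratio $\log \frac{P_{i_1} \times \cdots \times P_{i_n}}{Q^{\times n}}$ evaluated on $(z_1,\ldots,z_n)$ is exactly $\sum_k \log \frac{P_{i_k}(z_k)}{Q(z_k)}$, so that $D_s^{\eps}(P_{i_1} \times \cdots \times P_{i_n} \| Q^{\times n})$ is the supremum of all $R$ with $\Pr[S_n \le R] \le \eps$; equivalently, it is the largest $R$ below the ``$\eps$-quantile'' of $S_n$. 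The summands $X_k := \log \frac{P_{i_k}(Z_k)}{Q(Z_k)}$ are independent with $\Exp[X_k] = D(P_{i_k}\|Q)$, $\mathrm{Var}[X_k] = V(P_{i_k}\|Q)$, and $\Exp|X_k - \Exp X_k|^3 = T(P_{i_k}\|Q)$; these are finite because $Q$ dominates each $P_i$ and the alphabet $\cZ$ is finite (so there are only finitely many distinct summand distributions, each bounded). Hence $\Exp[S_n] = n D_n$, $\mathrm{Var}[S_n] = n V_n$, and the averaged third absolute moment is $n T_n$.

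For the Berry--Esseen bound, I would apply the Berry--Esseen theorem for independent, non-identically distributed summands (\cite[Sec.~XVI.5]{feller}): writing $B_n := \sqrt{\mathrm{Var}[S_n]} = \sqrt{n V_n}$, for every $t$,
\begin{align*}
  \Pr\!\left[ \frac{S_n - n D_n}{B_n} \le t \right] \ge \Phi(t) - \frac{6 \sum_{k=1}^n \Exp|X_k - \Exp X_k|^3}{B_n^3} = \Phi(t) - \frac{6 \, n T_n}{(n V_n)^{3/2}} = \Phi(t) - \frac{6 T_n}{\sqrt{n V_n^3}} .
\end{align*}
Now suppose, for contradiction, that $R^* := D_s^{\eps}(P_{i_1}\times\cdots\times P_{i_n}\|Q^{\times n})$ exceeds $n D_n + \sqrt{n V_n}\, \Phi^{-1}\!\big(\eps + 6 T_n/\sqrt{n V_n^3}\big)$. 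By definition of the supremum there is an $R$ with $\Pr[S_n \le R] \le \eps$ and $R$ still strictly above that threshold; setting $t := (R - n D_n)/B_n > \Phi^{-1}(\eps + 6T_n/\sqrt{nV_n^3})$ and hence $\Phi(t) \ge \eps + 6 T_n/\sqrt{n V_n^3}$, the displayed Berry--Esseen inequality gives $\Pr[S_n \le R] = \Pr[(S_n - nD_n)/B_n \le t] \ge \Phi(t) - 6T_n/\sqrt{nV_n^3} \ge \eps$, and a little care (using that $\Phi$ is strictly increasing so the strict inequality in $t$ yields a strict inequality in $\Phi(t)$, or else absorbing it into the definition of $D_s^{\eps}$ as a supremum) forces $\Pr[S_n \le R] > \eps$, contradicting $\Pr[S_n \le R] \le \eps$. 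This requires $V_n > 0$ so that $B_n > 0$ and the normalization is legitimate.

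For the Chebyshev bound, no positivity of $V_n$ is needed (if $V_n = 0$ the right side is just $n D_n$ and $S_n = n D_n$ almost surely, so the claim is immediate). Otherwise, one-sided Chebyshev (Cantelli's inequality) gives, for any $a > 0$,
\begin{align*}
  \Pr\!\left[ S_n \le n D_n - a \right] \le \frac{\mathrm{Var}[S_n]}{\mathrm{Var}[S_n] + a^2} = \frac{n V_n}{n V_n + a^2},
\end{align*}
so choosing $a$ so that this equals $\eps$, i.e. $a = \sqrt{n V_n (1-\eps)/\eps}$, shows $\Pr[S_n \le n D_n - a] \le \eps$; a cruder route, good enough for the stated constant, is plain two-sided Chebyshev $\Pr[|S_n - nD_n| \ge a] \le nV_n/a^2$, yielding $\Pr[S_n \le nD_n - a] \le nV_n/a^2 \le \eps$ for $a = \sqrt{nV_n/\eps}$; and to match the paper's $\sqrt{nV_n/(1-\eps)}$ one applies the same reasoning to the complementary tail or simply notes $\Pr[S_n < nD_n + \sqrt{nV_n/(1-\eps)}] \ge 1 - \Pr[S_n - nD_n \ge \sqrt{nV_n/(1-\eps)}] \ge 1 - (1-\eps) = \eps$ by Chebyshev on the upper tail, whence any $R \ge nD_n + \sqrt{nV_n/(1-\eps)}$ satisfies $\Pr[S_n \le R] \ge \eps$ and therefore cannot be strictly below the $\eps$-quantile — giving the bound on $D_s^{\eps}$. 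The only genuinely delicate point is bookkeeping around the supremum and the direction of the (weak versus strict) inequalities defining $D_s^{\eps}$; the probabilistic content is entirely standard.
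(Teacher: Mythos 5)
Your proposal is correct and takes essentially the same route as the paper: the same sum $S_n$ of independent log-likelihood ratios with mean $nD_n$ and variance $nV_n$, the same non-i.i.d.\ Berry--Esseen bound with constant $6$ from Feller, and a Chebyshev bound on the upper tail. The weak-versus-strict bookkeeping you flag at the end is resolved exactly as the paper does it, by noting that any feasible $R > nD_n$ must satisfy $1 - nV_n/(R-nD_n)^2 \le \eps$, which forces $R \le nD_n + \sqrt{nV_n/(1-\eps)}$.
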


\begin{proof}
  We consider  the cumulative distribution of the random variable $S_n:=\sum_{k} \log P_{i_k}(X_{i_k}) - \log Q(X_{i_k})$ where each $X_{i_k}$ has distribution $P_{i_k}$. The random variable $S_n$ has mean $n D_n$ and variance $n V_n$.
  The general case, Eq.~\eqref{eq:cheby}, is shown using Chebyshev's inequality, which yields
  \begin{align*}
    \eps \geq P \bigg[ \sum_{k} \log \frac{P_{i_k}}{Q} \leq R \bigg] \geq 1 - \frac{n V_n}{(R - n D_n )^2} \qquad \textrm{for } R > n D_n
  \end{align*}
  Hence, restricting to $R > n D_n$ and relaxing the bound on $R$ in the supremum, we find
  \begin{align*}
     D_s^{\eps} \big( P_{i_1} \cp \ldots  P_{i_n} \big\| Q^{\times n} \big)
      \leq \sup \Big\{ R > n D_n \,\Big|\, 1 - \frac{n V_n}{(R - n D_n)^2} \leq \eps \Big\} = n D_n + \sqrt{\frac{n V_n}{1 - \eps}} .
  \end{align*}
  Furthermore, if $V_n > 0$, the Berry-Esseen theorem~\cite[Sec.\ XVI.5]{feller} states that
  \begin{align*}
    \Bigg| P \bigg[ \sum_{k} \log \frac{P_{i_k}}{Q} \leq R \bigg] - \Phi\bigg( \frac{R - n D_n}{\sqrt{n V_n}} \bigg) \Bigg| \leq \frac{6\, T_n}{\sqrt{n V_n^3}} .
  \end{align*}
  Hence, we obtain
  \begin{align*}
    D_s^{\eps} \big( P_{i_1} \cp \ldots  P_{i_n} \big\| Q^{\times n} \big)
    &\leq n D_n + \sqrt{n V_n} \Phi^{-1}\bigg(\eps + \frac{6\, T_n}{\sqrt{n V_n^3}}\bigg),
  \end{align*}
  which concludes the proof.
\end{proof}

\subsection{Converse Bounds on General Channels}

Here, we give a new converse bound on the size of arbitrary codes for general channels, for the average probability of error formulation. 

\begin{proposition}
  \label{th:one-shot}
  Let $\eps \in (0, 1)$ and let $W$ be any channel from $\cX$ to $\cY$. 
  Then, for any $\delta \in (0, 1-\eps)$, we have
  \begin{align*}
    \log M^*(W,\eps) &\leq \inf_{Q \in \cP(\cY)}\ \sup_{x \in \cX}\ D_s^{\eps+\delta}\big( W(\cdot|x) \big\| Q \big) + 
    \log \frac{1}{\delta}  .
  \end{align*}
\end{proposition}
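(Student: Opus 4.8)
The plan is to start from a code $\cC = \{\cM, e, d\}$ with $|\cM| = M^*(W, \eps)$ and $\perr(\cC, W) \le \eps$, and to convert it into a binary hypothesis test that will be controlled by the hypothesis testing divergence $D_h^{\eps}$. The key structural observation is that the encoder and decoder together partition (most of) the output space $\cY$ into decoding regions $\cD_m := d^{-1}(m)$ for $m \in \cM$. For a fixed auxiliary output distribution $Q \in \cP(\cY)$, I would consider the ``average'' channel input distribution induced by the uniform message distribution, namely $\bar{P} := \frac{1}{|\cM|} \sum_{m} \delta_{e(m)} \in \cP(\cX)$, and compare the joint distribution $\bar{P} \cp W$ against the product $\bar{P} \cp Q$ on $\cX \times \cY$. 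The test $\xi(x,y) := 1\{ y \in \cD_{d^{-1} \text{ applied correctly}}\}$—more precisely the test that accepts $(x,y)$ iff $y$ lands in the decoding region of the message that was encoded to $x$—has, under $\bar P \cp W$, acceptance probability equal to the average success probability $1 - \perr(\cC, W) \ge 1 - \eps$, which is exactly the type-I constraint in the definition of $D_h^{\eps}$.

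Next I would bound the type-II error, i.e.\ $\Exp_{\bar P \cp Q}[\xi]$. Since the decoding regions $\{\cD_m\}$ are disjoint and each message's codeword $e(m)$ receives weight $1/|\cM|$ under $\bar P$, one gets $\Exp_{\bar P \cp Q}[\xi] = \frac{1}{|\cM|} \sum_m Q(\cD_m) \le \frac{1}{|\cM|} \sum_m \big(\text{something}\big)$, and because the $\cD_m$ partition $\cY$ we obtain $\sum_m Q(\cD_m) \le 1$, hence $\Exp_{\bar P \cp Q}[\xi] \le \frac{1}{|\cM|}$. Feeding this into the definition of $D_h^{\eps}$ (with $R = \log |\cM| - \log\frac{1}{1-\eps}$, say, noting the $(1-\eps)\exp(-R)$ factor), this yields $\log |\cM| \le D_h^{\eps}(\bar P \cp W \,\|\, \bar P \cp Q)$ up to an additive constant; carrying the normalization through carefully gives $\log M^*(W,\eps) \le D_h^{\eps}(\bar P \cp W \| \bar P \cp Q)$ exactly (the $1-\eps$ factors cancel against the $1-\eps$ in the type-I constraint). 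Then I apply Lemma~\ref{lm:h-s} to pass from $D_h^{\eps}$ to $D_s^{\eps+\delta}$ at the cost of $\log\frac{1-\eps}{\delta} \le \log\frac{1}{\delta}$, then Lemma~\ref{lm:channels} to replace $D_s^{\eps+\delta}(\bar P \cp W \| \bar P \cp Q)$ by $\sup_{x:\,\bar P(x)>0} D_s^{\eps+\delta}(W(\cdot|x)\|Q) \le \sup_{x \in \cX} D_s^{\eps+\delta}(W(\cdot|x)\|Q)$. Since $Q$ was arbitrary, taking the infimum over $Q \in \cP(\cY)$ gives the claimed bound.

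The main obstacle I anticipate is getting the constants exactly right in the hypothesis-testing step—specifically handling the $(1-\eps)\exp(-R)$ normalization in the definition of $D_h^\eps$ so that no spurious $\log\frac{1}{1-\eps}$ term survives, and making sure the ``symbol-wise'' relaxation is legitimate, i.e.\ that the test $\xi$ genuinely only needs to depend on $(x,y)$ through the decoding region structure and that the disjointness argument $\sum_m Q(\cD_m) \le 1$ is applied to the $\cY$-marginal and not accidentally to something that overcounts. A secondary subtlety is that the decoding function $d:\cY\to\cM$ may not be surjective and the regions $\cD_m$ need not cover all of $\cY$ if we allow a ``failure'' symbol, but this only helps (it makes $\sum_m Q(\cD_m)$ smaller), so the inequality is safe. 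Everything else—Lemmas~\ref{lm:h-s} and~\ref{lm:channels}—is already available and plugs in directly.
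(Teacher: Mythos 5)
Your proposal follows essentially the same route as the paper: your test on $\cX\times\cY$ (accept $(x,y)$ iff $y$ decodes to the message carried by $x$) is exactly the pullback of the paper's test $\xi(m,m')=\delta_{m,m'}$ through the encoder and decoder, the type-I/type-II computation is the same, and the concluding steps (Lemma~\ref{lm:h-s}, then Lemma~\ref{lm:channels}, then the infimum over $Q$) coincide with the paper's. Two small repairs are needed. First, your test is only well defined---and the identity $\Exp_{\bar{P}\cp Q}[\xi]=\frac{1}{\abs{\cM}}\sum_m Q(d^{-1}(m))$ only valid---when the encoder is injective; the proposition must hold for arbitrary codes, and if several messages share a codeword $x$, ``the message that was encoded to $x$'' is ambiguous, and replacing it by a union of decoding regions would overcount the type-II error. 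The standard fix is to randomize: given $x$, accept $y$ with probability $\abs{\{m \,:\, e(m)=x,\ d(y)=m\}}/\abs{\{m \,:\, e(m)=x\}}$. Since $\bar{P}(x)=\abs{\{m:e(m)=x\}}/\abs{\cM}$, the acceptance probability under $\bar{P}\cp W$ is still $1-\perr(\cC,W)\geq 1-\eps$, while under $\bar{P}\cp Q$ it is $\frac{1}{\abs{\cM}}\sum_y Q(y)\,\abs{\{m: d(y)=m\}} \leq \frac{1}{\abs{\cM}}$. This randomization is precisely what the paper's footnote about ``probabilistically inverting the encoder'' accomplishes before applying data processing to the pair $(M,M')$.

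Second, your intermediate claim $\log M^*(W,\eps)\leq D_h^{\eps}(\bar{P}\cp W\|\bar{P}\cp Q)$ is slightly too strong: with type-II error at most $1/\abs{\cM}$ and type-I success at least $1-\eps$, the definition of $D_h^{\eps}$ (with its $(1-\eps)\exp(-R)$ normalization) yields $D_h^{\eps}(\bar{P}\cp W\|\bar{P}\cp Q)\geq \log\abs{\cM}+\log(1-\eps)$, i.e.\ $\log\abs{\cM}\leq D_h^{\eps}+\log\frac{1}{1-\eps}$. This is harmless: combining it with the exact cost $\log\frac{1-\eps}{\delta}$ from Lemma~\ref{lm:h-s} (rather than its upper bound $\log\frac{1}{\delta}$) produces exactly the claimed additive term $\log\frac{1}{\delta}$, so the final bound is unaffected.
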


The first part of the proof is analogous to the meta-converse in~\cite[Thm.~27]{PPV10} (see also~\cite{Wang09} and~\cite{WangRenner}, which inspired our conceptually simpler proof technique). Our bound is a new ``symbol-wise''  relaxation of the meta-converse  
which yields a result in the spirit of~\cite[Thms.~28 and~31]{PPV10}. The maximization over symbols allows us to apply our converse bound on non-constant-composition codes directly.

\begin{proof}
  For any code $\cC = \{\mathcal{M}, e, d\}$ with $\perr(\cC) \leq \eps$ 
  and any $Q \in \cP(\cY)$, the following holds. 
  
  Starting from a uniform distribution over $\cM$, the Markov chain 
  $M \xrightarrow{\ e\ } X \xrightarrow{\ W\ } Y \xrightarrow{\ d\ } M'$ induces a
  joint probability distribution $P_{MXYM'}$. 
  Due to the data-processing inequality for $D_h^\eps$, we immediately
  find 
  $D_h^{\eps}(P \cp W\|P \cp Q) = D_h^{\eps}(P_{XY} \| P_X \cp Q_Y) \geq D_h^{\eps}(P_{MM'} \| P_M \cp Q_{M'})$,  
  where $P_X=P$ and $Q_{M'}$ is the distribution induced by $d$ applied to $Q_Y = Q$.\footnote{Note that due to the Markov property, the encoding can be inverted probabilistically, without effecting the correlation between $M$ and $M'$.}
  Moreover, using the test $\xi(m,m') = \delta_{m,m'}$, we readily see that
  \begin{align*}
  \Exp_{P \times W} \big[\xi(M,M')\big] = P [ M = M']\geq 1 - \eps 
  \quad \textrm{and} \quad
  \Exp_{P \times Q} \big[ \xi(M,M') \big] = \frac{1}{\abs{\cC}} .
  \end{align*}
  Hence, $D_h^{\eps}(P_{MM'} \| P_M \cp Q_{M'}) \geq \log \abs{ \cC} + \log (1\!-\!\eps)$ by definition of the $\eps$-hypothesis testing divergence.
  Finally, applying Lemmas~\ref{lm:h-s} and~\ref{lm:channels}, we find
  \begin{align*}
    \sup_{x \in \cX} D_s^{\eps+\delta}\big(W(\cdot|x) \big\| Q\big) &\geq 
    D_s^{\eps+\delta}\big(P \cp W \big\| P \cp Q\big) \\
    &\geq D_h^{\eps}\big(P \cp W \big\| P \cp Q\big) - \log \frac{1\!-\!\eps}{\delta}
    \geq \log \abs{\cC} -
    \log \frac{1}{\delta} .
  \end{align*}
 This yields the converse bound upon minimizing over $Q \in \cP(\cY)$. 
\end{proof}

\subsection{A Suitable Choice of Output Distribution $Q$}
\label{sc:output}

For $n$-fold repetitions of a DMC, the bound in Proposition~\ref{th:one-shot} evaluates to
\begin{align}
\log  M^*(W^n, \eps) \leq \min_{Q^{(n)} \in \cP(\cY^{\times n})}\ \max_{\vec{x} \in \cX^{\times n}} D_s^{\eps+\delta}\big( W^n(\cdot|\vec{x}) \big\|Q^{(n)}\big) + 
    \log \frac{1}{\delta} ,
\nonumber
\end{align}
and it is thus important to find a suitable choice of $Q^{(n)} \in \cP(\cY^{\times n}) $ to further upper bound the above. Symmetry considerations (see, e.g., \cite[Sec.~V]{Polyanskiy13}) allow us to restrict the search to distributions that are invariant under permutations of the $n$ channel uses.
\begin{figure}
\centering
\begin{picture}(150,150)
\put(0,5){\vector(1,0){150}}
\put(5,0){\vector(0,1){150}}

\put(150,10){\mbox{$Q(0)$}}
\put(10,150){\mbox{$Q(1)$}}
\put(67.5,67.5){\circle*{4}}
\put(71,67.5){\mbox{$Q^*$}}
\put(87.5,47.5){\circle*{4}}
\put(107.5,27.5){\circle*{4}}
\put(127.5,7.5){\circle*{4}}

\put(47.5,87.5){\circle*{4}}
\put(27.5,107.5){\circle*{4}}
\put(7.5,127.5){\circle*{4}}

\put(67.5,67.5){\line(0,-1){20}}
\put(67.5,47.5){\line(1,0){20}}

\put(69,38){\mbox{$\frac{1}{\sqrt{2n}}$}}
\put(45,56){\mbox{$\frac{1}{\sqrt{2n}}$}}

\put(51,87.5){\mbox{$Q_{[-1,1]}$}}
\put(87.5,51){\mbox{$Q_{[1,-1]}$}}

\put(110,27.5){\mbox{$Q_{[2,-2]}$}}

\put(33,106){\mbox{$Q_{[-2,2]}$}}

\put(45,125){\mbox{$\cP(\cY)$}}
\put(44,125){\vector(-3,-1){25}}

\put(125,-5){\mbox{$(0,1)$}}
\put(-19,125){\mbox{$(1,0)$}}
\thicklines
\put(5,130){\line(1,-1){125}}
\end{picture}
\caption{Illustration of the choice of $Q_{\vec{k}}$ for $\cY=\{0,1\}$. Note that $\zeta  =2$ for $|\cY|=2$.}
\label{eqn:quantize}
\end{figure}
Let $\zeta :=\abs{\cY} (\abs{\cY} - 1)$ and let $\gamma>0$ be a constant which is to be chosen later. Consider the following convex combination of product distributions:
\begin{align}
  Q^{(n)}(\vec{y}) := \frac12 \sum_{\vec{k} \in \mathcal{K}}
  \frac{\exp \big(- \gamma \twonorm{\vec{k}}^2 \big)}{F}
  \, \prod_{i=1}^n Q_{\vec{k}}(y_i) 
  + \frac12 \sum_{P_{\vec{x}} \in \cP_n(\cX)} \frac{1}{\abs{\cP_n(\cX)}} \prod_{i=1}^n P_{\vec{x}}W(y_i),  \label{eqn:Qn}
\end{align}
where $F$ is a normalization constant that ensures $\sum_{\vec{y}}Q^{(n)}(\vec{y}) =1$ and 
\begin{align*}
  Q_{\vec{k}}(y) := Q^*(y) + \frac{k_{y}}{\sqrt{n \zeta }}, \qquad
  \mathcal{K} := \Big\{ \vec{k} \in \mathbb{Z}^{\abs{\cY}} \,\Big|\, \sum_y k_y = 0 \land k_y \geq - Q^*(y) \sqrt{n \zeta } \Big\}.
\end{align*}
The convex combination of $(P_{\vec{x}}W)^{\times n}$ and the optimal  output distribution $(Q^*)^{\times n}$ (corresponding to $\vec{k}=\vec{0}$) in $Q^{(n)}$ is  inspired partly  by Hayashi~\cite[Thm.\ 2]{Hayashi09}.  What we have done in our choice of $Q_{\vec{k}}$ is   to uniformly quantize the simplex $\cP(\cY)$ along axis-parallel directions.   The constraint  that each $\vec{k}$ belongs to $\mathcal{K}$ ensures that each $Q_{\vec{k}}$ is a valid probability mass function. See Fig.~\ref{eqn:quantize}.  We find that
\begin{align*}
  F \leq \sum_{\vec{k} \in \mathbb{Z}^{\abs{\cY}}} \exp \big(-\gamma \twonorm{\vec{k}}^2 \big)
  = \Bigg( \sum_{k= -\infty}^{\infty} \exp\big(- \gamma k^2\big) \Bigg)^{\abs{\cY}}
  \leq \Bigg( 1 + \sqrt{\frac{\pi}{\gamma}} \Bigg)^{\abs{\cY}}
\end{align*}
is a finite constant. Furthermore, by construction, the representation points  $\{ Q_{\vec{k}} \}_{\vec{k}}$ form an 
{\em $\epsilon$-net} with
$\epsilon = n^{-\frac{1}{2}}$ for $\cP(\cY)$. Namely, for every $Q \in \cP(\cY)$, there exists a $\vec{k}$ such that $\twonorm{Q - Q_{\vec{k}}} \leq n^{-\frac{1}{2}}$. This can be verified easily since by choosing a $\vec{k}$ that minimizes the distance in all but one direction  (say the last), yielding 
\begin{align*}
  \twonorm{Q - Q_{\vec{k}}}^2 
  &= \sum_{y = 1}^{\abs{\cY} - 1} \big( Q(y) - Q_{\vec{k}}(y) \big)^2 +  \big(Q(|\cY|)-Q_{\vec{k}}(|\cY|) \big)^2\\
&= \sum_{y = 1}^{\abs{\cY} - 1} \big( Q(y) - Q_{\vec{k}}(y) \big)^2+
  \Bigg( \sum_{y = 1}^{\abs{\cY} - 1}  Q_{\vec{k}}(y)-Q(y) \Bigg)^2 \\
  &\leq \sum_{y = 1}^{\abs{\cY} - 1} \bigg( \frac{1}{\sqrt{n \zeta  }} \bigg)^2 +
  \Bigg( \sum_{y = 1}^{\abs{\cY} - 1}  \frac{1}{\sqrt{n  \zeta  }} \Bigg)^2
  = \frac{1}{n} .
\end{align*}
Let us, at this point, provide some intuition for the choice of $Q^{(n)}$ in~\eqref{eqn:Qn}. The first part of the convex combination is used to approximate output distributions induced by inputs types that are close to the set of CAIDs. We choose a weight for each element of the $\epsilon$-net that drops exponentially with the distance from the CAOD. This ensures that the necessary normalization  $F$, does not depend on $n$ even though the number of elements in the net increases with $n$. The smaller weights for types far from the CAIDs will later be compensated by the larger deviation of the corresponding mutual information from the capacity. This is achieved by the second part of the convex combination which we use to match the input types far from the CAIDs.

\begin{figure}
\centering
\begin{tabular}{cc}
\begin{picture}(150,150)
\put(0,0){\line(1,0){150}}
\put(0,0){\line(1,2){75}}
\put(75,150){\line(1,-2){75}}
\put(105,25){\mbox{$\cP(\cX)$}}

 \put(34,34){\mbox{$\Pi$}}
 \put(44,44){\vector(1,1){30}}
 \multiput(32.5,65)(8,0){9}{\line(1,2){10}}
 \put(104.5,65){\line(1,2){6}} 
\put(75,115){\vector(1,-2){17}}
 \put(71,119){\mbox{$\Pi_\mu$}}
 \put(25,75){\vector(0,-1){10}}
  \put(25,75){\vector(0,1){10}}
 \put(11,73){\mbox{$2\mu$}}
  \put(165,92){\mbox{$W$}}
 \qbezier(120, 75)(165, 100)(210, 75)
 \put(207,77){\vector(2,-1){5}}
\color{black}
\thicklines
\put(37.5,75){\line(1,0){75}}
 \multiput(42.5,85)(4,0){16}{\line(1,0){2}} 
 \multiput(32.5,65)(4,0){21}{\line(1,0){2}} 

\end{picture} & \hspace{.4in}
\begin{picture}(150,150)
\put(0,0){\line(1,0){150}}
\put(0,0){\line(1,2){75}}
\put(75,150){\line(1,-2){75}}
\put(105,25){\mbox{$\cP(\cY)$}}

  \put(50,19){\vector(1,1){25}}
   \put(45,10){\mbox{$\Pi_\mu W$}}
     \put(64,111){\vector(1,-4){11}}
\multiput(67,40)(0,8){2}{\line(1,1){16}}
\put(68,32){\line(1,1){15}}
\put(67,56){\line(1,1){13}}
\put(73,29){\line(1,1){10}}
\put(64,114){\mbox{$\Gamma_\mu^\eta$}}
\put(75,50){\circle*{4}}
\put(45,57){\vector(4,-1){28}}
\put(35,57){\mbox{$Q^* $}}
\thicklines
\put(75, 37){\line(0,1){26}}
\put(75, 50){\oval(16, 44)}
\end{picture} 
\end{tabular}
\caption{Illustration of the sets in Section~\ref{sec:continuity} for $|\cX|=|\cY|=3$. Here, $\Pi$ is not a singleton and $\Pi_\mu W$ has   measure zero in $\cP(\cY)$ so $W$ is rank-deficient. The unique CAOD $Q^*$ is the image of $\Pi$ under $W$, $\Pi_\mu W$ is the image of $\Pi_\mu$ under $W$ and $\Gamma_\mu^\eta$ is the ``$\eta$-blown-up'' version of $\Pi_\mu W$. }
\label{fig:sets}
\end{figure}

\subsection{Continuity around the CAIDs and the unique CAOD} \label{sec:continuity}

We will often be concerned with probability distributions close to the set of CAIDs $\Pi$ in Euclidean distance, i.e., those distributions belonging to $$\Pi_{\mu} := \Big\{ P \in \cP(\cX)\, \Big|\, \min_{P^* \in \Pi} \twonorm{P - P^*} \leq \mu \Big\}$$ for some small $\mu>0$. 
Sometimes we also need to restrict to probability distributions  in $\Pi_\mu$ with positive conditional information variance. For a constant $v > 0$ we define
$$\Pi_{\mu}^v := \Big\{ P \in\Pi_\mu\,\big|\, V(P,W) \geq v \Big\}.$$

 
The image of $\Pi_{\mu}$ under $W$ is denoted  as $\Pi_{\mu}W$. We also consider a larger, ``$\eta$-blown-up''  version, of $\Pi_\mu W$, namely
$$\Gamma_{\mu}^{\eta} := \Big\{ Q \in \cP(\cY) \, \Big| \, \exists\,  P \in \Pi_{\mu} \textrm{ s.t. } \twonorm{PW - Q} \leq \eta \Big\}.$$
Note that $\Gamma_\mu^0=\Pi_{\mu}W$ if the stochastic matrix $W$ has full rank. See Fig.~\ref{fig:sets} for an illustration. The following Lemma summarizes known results about these sets.

\begin{lemma}
  \label{lm:Pi-mu}
  Let $W:\cX\to\cY$ be a DMC 
  and $v > 0$ be a constant.
  There exists $\mu > 0$ and $\eta > 0$  and  finite constants $V^{+} > 0$, $T^{+} > 0$, $q_{\min}  > 0$, 
  $\alpha > 0$, and $\beta > 0$ such that the following holds.
  For all $P \in \Pi_{\mu}$ and their projections
  $P^* := \argmin_{P' \in \Pi} \twonorm{P - P'}$ and all $Q \in \Gamma_{\mu}^{\eta}$ we have
  \begin{enumerate}
    \item[1.] $Q(y) > q_{\min} $ for all $y \in \cY$,
    \item[2.] $V(W \| Q | P) \geq \frac{V_{\min}}{2}$,
    \item[3.] $I(P, W) \leq C(W) - \alpha  \twonorm{P - P^*}^2$,
    \item[4.] $D(  W \|   Q|P) \leq I(P, W) + \frac{\twonorm{Q - PW}^2}{q_{\min} }$,
    \item[5.] $V(W \| Q | P) \leq V^{+}$ and $T(W \| Q | P) \leq T^{+}$.
  \end{enumerate}
  Furthermore, for any $P \in \Pi_{\mu}^v$ we have
  \begin{enumerate}
    \item[6.] $V(W \| Q | P) \geq \frac{v}{2} > 0$,
    \item[7.] $ \big|\sqrt{V(P,W)} - \sqrt{V(P^*, W)}\, \big| \leq \beta \twonorm{P - P^*}$,  
    \item[8.] $ \big|\sqrt{V(W \| Q | P)} - \sqrt{V(P, W)}\, \big| \leq \beta \twonorm{Q - PW}$.
  \end{enumerate}
\end{lemma}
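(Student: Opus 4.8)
\textbf{Proof plan for Lemma~\ref{lm:Pi-mu}.}
The plan is to treat the claims as essentially a collection of continuity/compactness statements about information measures, anchored at the compact set $\Pi$ and the unique, fully-supported CAOD $Q^*$, and to extract a single $(\mu,\eta)$ small enough that all of them hold simultaneously. First I would fix some $\mu_0>0$ and $\eta_0>0$ and note that $\overline{\Pi_{\mu_0}}$ and $\overline{\Gamma_{\mu_0}^{\eta_0}}$ are compact; since $Q^*(y)>0$ for all $y$ and $Q\mapsto\min_y Q(y)$ is continuous, claim~1 holds on $\Gamma_{\mu}^{\eta}$ for all sufficiently small $\mu,\eta$, with $q_{\min}$ equal to, say, half the minimum of $Q^*$. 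Claim~5 is immediate once claim~1 holds: on the compact set of pairs $(P,Q)\in\Pi_\mu\times\Gamma_\mu^\eta$ with $Q(y)\ge q_{\min}$, the maps $(P,Q)\mapsto V(W\|Q|P)$ and $(P,Q)\mapsto T(W\|Q|P)$ are continuous (each $\log\frac{W(y|x)}{Q(y)}$ is bounded), hence bounded by constants $V^+,T^+$. Claim~2 follows by continuity of $V(W\|Q|P)$ together with the fact that $V(W\|Q^*|P^*)=V(P^*,W)\ge V_{\min}$ for $P^*\in\Pi$ (here one uses $Q^*=P^*W$), so for $\mu,\eta$ small the value stays above $V_{\min}/2$; similarly claim~6 is the same argument but restricted to $\Pi_\mu^v$, where by definition $V(P^*,W)\ge$ (a limit of) $v$ — more carefully, $V(P,W)\ge v$ on $\Pi_\mu^v$ and continuity of $V(W\|Q|P)$ around $(P,PW)$ gives $V(W\|Q|P)\ge v/2$ for small $\eta$ (and one must also check the projections of $\Pi_\mu^v$ land where needed, or simply argue directly from $P$).

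For claims~3 and~4 I would use the standard second-order behaviour of the mutual information near $\Pi$. Claim~3: $I(\cdot,W)$ is concave, smooth in the interior directions, and attains its max $C$ exactly on $\Pi$; on the compact slab $\Pi_\mu$ one gets a quadratic lower bound on the gap $C-I(P,W)$ in terms of $\twonorm{P-P^*}^2$ by a Taylor/Hessian argument (the Hessian of $-I$ is positive definite transverse to $\Pi$, uniformly on $\Pi$ by compactness), which is exactly the kind of estimate invoked in \cite{PPV10,Pol10}; alternatively cite it directly as a known fact. Claim~4 is the ``golden formula'' type identity: $D(W\|Q|P) = I(P,W) + D(PW\|Q)$, and then $D(PW\|Q)\le \frac{1}{q_{\min}}\twonorm{PW-Q}^2$ follows from the elementary bound $D(R\|Q)\le \sum_y \frac{(R(y)-Q(y))^2}{Q(y)}\le \frac{1}{q_{\min}}\twonorm{R-Q}^2$ once $Q(y)\ge q_{\min}$ (valid on $\Gamma_\mu^\eta$ by claim~1). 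Claims~7 and~8 are Lipschitz statements for $P\mapsto\sqrt{V(P,W)}$ (resp.\ $Q\mapsto\sqrt{V(W\|Q|P)}$) on the relevant sets: away from the zero set of the variance — which is where claims~6,2 guarantee we are, by restricting to $\Pi_\mu^v$ or using $V_{\min}>0$ — the square root of a smooth, bounded-away-from-zero function is itself smooth, hence Lipschitz on the compact domain with some constant $\beta$ (take the max of the two Lipschitz constants). One subtlety: claim~7 does \emph{not} obviously restrict $P^*$ to a region of positive variance unless $V_{\min}>0$; but on $\Pi_\mu^v$ we have $V(P,W)\ge v$ and, since $P^*\to P$ as $\mu\to 0$ and the inequality in claim~7 is what we want to prove, the statement is vacuous/easy for $\twonorm{P-P^*}$ large and is controlled by Lipschitzness where $P^*$ itself has variance bounded below — this needs a small argument that $V(P^*,W)$ is also bounded below on the projections of $\Pi_\mu^v$ for small $\mu$, which again follows from continuity.

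The main obstacle, as usual with such ``shrink everything at once'' lemmas, is bookkeeping rather than depth: one must choose $\mu$ and $\eta$ small enough that (i) claim~1 holds, giving $q_{\min}$, (ii) the continuity arguments for claims~2,~5,~6 kick in, and (iii) the Lipschitz arguments for~7,~8 apply on a region where the variance is bounded away from zero — and these choices are interdependent (the Lipschitz constant for $\sqrt{V}$ blows up near the zero set of $V$). The clean way to handle this is to first pick $\mu_0,\eta_0$ and establish claims~1 and~5 on $\Pi_{\mu_0}\times\Gamma_{\mu_0}^{\eta_0}$, then shrink to $\mu_1\le\mu_0$, $\eta_1\le\eta_0$ so that claims~2,~6 hold (using continuity and the values at $(P^*,Q^*)$), and finally, on that reduced region where $V(W\|Q|P)\ge\min\{V_{\min},v\}/2>0$, invoke smoothness of $\sqrt{\cdot}$ to get~7,~8. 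The genuinely new content is essentially nil — these are the regularity facts collected from \cite{PPV10, Pol10, gallagerIT} — so the write-up should lean on those references for claim~3 in particular and keep the elementary estimates (claim~4's $\chi^2$ bound, the boundedness in claim~5) short.
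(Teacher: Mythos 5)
Your proposal is correct and takes essentially the same route as the paper: continuity and compactness on $\Pi_\mu\times\Gamma_\mu^\eta$ for Properties 1, 2, 5 and 6, citing Strassen/\PPV{} for the quadratic bound in Property 3, the identity $D(W\|Q|P)=I(P,W)+D(PW\|Q)$ with the $\chi^2$-type bound for Property 4, and Lipschitz continuity of $t\mapsto\sqrt{t}$ away from zero (via Property 6) for Properties 7 and 8. The only difference is bookkeeping: your worry about $V(P^*,W)$ in Property 7 is harmless, since $|\sqrt{a}-\sqrt{b}|\le |a-b|/\sqrt{a}$ with $a=V(P,W)\ge v$ already suffices.
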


\begin{proof}
  Properties~1 and~2 hold for small enough $\mu$ and $\eta$ by continuity since $Q^*$ has full support \cite[Cor.\ 1 to Thm.\ 4.5.2]{gallagerIT} and $V(W\|P^*W|P^*) \geq V_{\min}$. The case $V_{\min} = 0$ in Property~2 is
  trivial since $V(W\|Q|P) \geq 0$.  
  Property~3 was established by Strassen~\cite[Eq.\ (4.41)]{Strassen} as well as \PPV~\cite[Eq.\ (501)]{PPV10}. 
  Since $D(  W \|   Q|P) = I(P, W) + D(PW\|Q)$, Property~4 follows immediately from the fact that
  $D(PW\|Q) \leq \frac{1}{\min_{y \in \cY} Q(y)} \twonorm{PW - Q}^2$ (see, e.g.,~\cite[Lem.\ 6.3]{Csi06}).
  Property~5 follows from the fact that $(P, Q) \mapsto V(W\|Q|P)$ and $(P, Q) \mapsto T(W\|Q|P)$ are finite and continuous on
  the compact set $\Pi_{\mu} \times \Gamma_{\mu}^{\eta}$.
   
   Property~6 again holds for small enough $\eta$ by continuity and since $V(W\|P^*W|P) \geq v$ by definition of the set $\Gamma_{\mu}^{\eta}$.
   To verify Properties~7 and~8, note that the quotient $W(y|x)/Q(y)< \infty$ by Property~1. 
   If $W(y|x)/Q(y)=0$, the corresponding terms  in the sums defining $V(P,W)$ and $V(W \|Q |P)$  are  excluded because $\vartheta\log^k \vartheta\to 0$  as $\vartheta \to 0$ for all $k >0$. 
Hence, $P\mapsto V(P,W)$ and $Q \mapsto  V(W \| Q |P)$ are continuously differentiable on $\Pi_\mu$ and $\Gamma_\mu^\eta$ respectively. Because $t\mapsto \sqrt{t}$ is continuously differentiable away from $0$, by Property~6, $P\mapsto  \sqrt{ V(P,W)}$  and  $Q\mapsto \sqrt{V(W \| Q | P)} $  are Lipschitz continuous on $\Pi_\mu$ and $\Gamma_\mu^\eta$ respectively. The uniformity of $\beta$ in $P$ in Property~8  can be verified by explicitly calculating the   derivative of $Q\mapsto \sqrt{V(W \| Q | P)} $ and noting that it can be upper bounded by a finite constant independent of $P$. 
\end{proof}

\subsection{Asymptotics for DMCs} \label{sec:asymp}

We are now ready to prove our main result. Several special cases of~Theorem~\ref{th:main} require additional proof techniques. For the convenience of the reader, we state them separately as propositions. Theorem~\ref{th:main} then follows as a straightforward consequence of these propositions. See Fig.~\ref{fig:flowchart} for a summary. The following proposition considers the ``regular'' case, where the channel and $\eps$ satisfy $V_{\eps} > 0$.

\begin{proposition}
\label{pr:reg}
For every DMC $W$ and $\eps \in (0,1)$ such that $V_{\eps} > 0$, the blocklength $n$, $\eps$-error capacity satisfies
\begin{equation}
\log M^*(W^n,\eps)\le nC+\sqrt{nV_{\eps}} \Phi^{-1}(\eps)+\frac{1}{2}\log n + O(1).  \nonumber
\end{equation}
\end{proposition}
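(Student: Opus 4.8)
\textbf{Proof proposal for Proposition~\ref{pr:reg}.}

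The plan is to start from the symbol-wise converse of Proposition~\ref{th:one-shot}, instantiated for the $n$-fold channel with the carefully chosen output distribution $Q^{(n)}$ from~\eqref{eqn:Qn}. Choosing $\delta = n^{-1/2}$ (say), the $\log\frac1\delta$ term is only $\frac12\log n$, and we are left to show that $\max_{\vec x}\, D_s^{\eps+\delta}\big(W^n(\cdot|\vec x)\,\big\|\,Q^{(n)}\big) \le nC+\sqrt{nV_\eps}\,\Phi^{-1}(\eps)+O(1)$. The key is that $Q^{(n)}$ is a convex combination, so Lemma~\ref{lm:convex} lets us pick, \emph{for each} $\vec x$, the most convenient mixture component and pay only the $-\log(\text{weight})$ penalty. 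Concretely I would fix an arbitrary $\vec x$ with type $P_{\vec x}$ and split into two regimes according to whether $P_{\vec x}\in\Pi_\mu$ (types near the CAIDs) or not, with $\mu,\eta$ and the associated constants $V^+,T^+,q_{\min},\alpha,\beta$ supplied by Lemma~\ref{lm:Pi-mu} applied with $v$ chosen so that $V^v_\eps:=\min/\max$ over $\Pi^v_\mu$ still recovers $V_\eps$ — this uses crucially that $V_\eps>0$, so some capacity-achieving $P^*$ has positive conditional information variance and a whole neighborhood does too.

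For a type $P_{\vec x}$ \emph{far} from $\Pi$ (outside $\Pi_\mu$), I bound $D_s^{\eps+\delta}(W^n(\cdot|\vec x)\|Q^{(n)})$ by comparing against the second part of the mixture: choose the component $(P_{\vec x}W)^{\times n}$, which carries weight $\frac12|\cP_n(\cX)|^{-1}\ge \text{const}\cdot n^{-(|\cX|-1)}$, so Lemma~\ref{lm:convex} costs $O(\log n)$. But by Lemma~\ref{lm:channels} and the i.i.d.\ bound, $D_s^{\eps+\delta}\big(W^n(\cdot|\vec x)\,\big\|\,(P_{\vec x}W)^{\times n}\big)\le \sup_{x:P_{\vec x}(x)>0} D_s^{\eps+\delta}\big(W(\cdot|x)^{\times n}\,\big\|\,(P_{\vec x}W)^{\times n}\big)$; using the Chebyshev bound~\eqref{eq:cheby} from Lemma~\ref{lm:Ds-converse} this is at most $n\,\max_x D(W(\cdot|x)\|P_{\vec x}W)+O(\sqrt n)$. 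The point is that $\sum_x P_{\vec x}(x) D(W(\cdot|x)\|P_{\vec x}W)=I(P_{\vec x},W)\le C-\alpha\mu^2$ by Property~3 of Lemma~\ref{lm:Pi-mu} (a uniform gap since $P_{\vec x}\notin\Pi_\mu$), so even the worst symbol's divergence, plus the $O(\sqrt n+\log n)$ slack, stays below $nC+\sqrt{nV_\eps}\Phi^{-1}(\eps)+O(1)$ for all large $n$ — here I use that the quadratic penalty $n\alpha\mu^2$ dominates any $\Theta(\sqrt n)$ term. (One must check $\max_x D(W(\cdot|x)\|P_{\vec x}W)<\infty$ uniformly; since $P_{\vec x}W(y)\ge P_{\vec x}(x)W(y|x)$ this is fine on the support.)

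For a type $P_{\vec x}\in\Pi_\mu$, I compare against the first part of the mixture. Pick $\vec k$ so that $Q_{\vec k}$ is the net point closest to $P_{\vec x}W$, hence $\twonorm{Q_{\vec k}-P_{\vec x}W}\le n^{-1/2}$, so $Q_{\vec k}\in\Gamma_\mu^\eta$ for large $n$ and $\twonorm{\vec k}^2 = n\zeta\,\twonorm{Q_{\vec k}-Q^*}^2 = O(n)$; since $P_{\vec x}\in\Pi_\mu$ we actually get $\twonorm{Q_{\vec k}-Q^*}=O(\mu)$, so $\exp(-\gamma\twonorm{\vec k}^2)=\exp(-O(n))$... which is too small. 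The resolution — and I expect this to be the \textbf{main obstacle} — is to \emph{not} center at $Q^*$ blindly but to exploit the trade-off sketched after~\eqref{eqn:Qn}: a type at Euclidean distance $r$ from $\Pi$ induces an output distribution at distance $\Theta(r)$ from $Q^*$, so $\twonorm{\vec k}^2\asymp n r^2$ and the weight penalty is $\gamma n r^2+O(\log n)$; but Property~3 gives a mutual-information deficit of $\alpha r^2$, i.e.\ $n\alpha r^2$ in the exponent of the $n$-letter quantity, which \emph{compensates} the weight penalty provided $\gamma\le\alpha$ — this is exactly how $\gamma$ gets ``chosen later''. Concretely: Lemma~\ref{lm:channels} reduces to a per-symbol i.i.d.\ quantity $D_s^{\eps+\delta}\big(W(\cdot|x)^{\times n}\,\big\|\,Q_{\vec k}^{\times n}\big)$, to which I apply the Berry--Esseen bound of Lemma~\ref{lm:Ds-converse} (valid since Property~2 or~6 gives $V(W(\cdot|x)\|Q_{\vec k})$ bounded below, and Property~5 bounds $T$ above, so the correction $6T_n/\sqrt{nV_n^3}=O(n^{-1/2})$ and $\Phi^{-1}(\eps+O(n^{-1/2}))=\Phi^{-1}(\eps)+O(n^{-1/2})$ by a mean-value estimate where $\Phi^{-1}$ is locally Lipschitz). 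This yields, writing $P^*$ for the projection of $P_{\vec x}$ and $r=\twonorm{P_{\vec x}-P^*}$,
\begin{align*}
D_s^{\eps+\delta}\big(W^n(\cdot|\vec x)\,\big\|\,Q^{(n)}\big)
&\le n\max_{x:P_{\vec x}(x)>0} D\big(W(\cdot|x)\,\big\|\,Q_{\vec k}\big)
 + \sqrt{n\,V^{+}}\,\Phi^{-1}(\eps) + O(1) + \gamma\twonorm{\vec k}^2 + O(\log n).
\end{align*}
The delicate point is that I must pass from the per-symbol worst case back to the \emph{average} over $x\sim P_{\vec x}$: this is where I would instead apply Lemma~\ref{lm:Ds-converse} directly to the non-i.i.d.\ product $\prod_i W(\cdot|x_i)$ against $Q_{\vec k}^{\times n}$ (the lemma is stated exactly for this), so the leading term becomes $n D_n = n\sum_x P_{\vec x}(x)D(W(\cdot|x)\|Q_{\vec k}) = n\,D(W\|Q_{\vec k}|P_{\vec x})$, and similarly $V_n = V(W\|Q_{\vec k}|P_{\vec x})$, $T_n=T(W\|Q_{\vec k}|P_{\vec x})$. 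Now Property~4 gives $D(W\|Q_{\vec k}|P_{\vec x})\le I(P_{\vec x},W)+q_{\min}^{-1}\twonorm{Q_{\vec k}-P_{\vec x}W}^2 \le (C-\alpha r^2) + q_{\min}^{-1}n^{-1}$, so $nD_n\le nC-n\alpha r^2+O(1)$. Combining, the $r$-dependent terms are $-n\alpha r^2 + \gamma\twonorm{\vec k}^2$; since $\twonorm{Q_{\vec k}-Q^*}\le\twonorm{Q_{\vec k}-P_{\vec x}W}+\twonorm{P_{\vec x}W-Q^*}=\twonorm{P_{\vec x}W-P^*W}+O(n^{-1/2})$ and $W$ is a contraction in a suitable sense giving $\twonorm{(P_{\vec x}-P^*)W}\le \kappa\, r$ for a constant $\kappa$, we get $\twonorm{\vec k}^2\le n\zeta\kappa^2 r^2 + O(\sqrt n\cdot r) + O(1)$, hence the $r$-dependent contribution is $\le (\gamma\zeta\kappa^2-\alpha)n r^2 + O(\sqrt n)$, which is $\le O(1)$ uniformly in $r$ once $\gamma \le \alpha/(2\zeta\kappa^2)$ (the stray $O(\sqrt n\cdot r)$ is absorbed by the negative quadratic, using $\sqrt n r \le \frac{n r^2}{2c}+\frac{c}{2}$). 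Finally $\sqrt{nV_n}\,\Phi^{-1}(\eps)$ must be upper bounded by $\sqrt{nV_\eps}\,\Phi^{-1}(\eps)+O(1)$: by Properties~7 and~8, $\sqrt{V(W\|Q_{\vec k}|P_{\vec x})}\le \sqrt{V(P^*,W)}+\beta\twonorm{Q_{\vec k}-P_{\vec x}W}+\beta\twonorm{P_{\vec x}-P^*}\le\sqrt{V_\eps}+\beta(\kappa+1)r+O(n^{-1/2})$ when $\Phi^{-1}(\eps)\ge0$ (and the reverse inequality, bounding by $V_{\min}$, when $\Phi^{-1}(\eps)<0$, using Property~2), so $\sqrt{nV_n}\,\Phi^{-1}(\eps)\le\sqrt{nV_\eps}\,\Phi^{-1}(\eps)+O(\sqrt n\, r)+O(1)$, and the $O(\sqrt n\,r)$ term is again killed by the $-n\alpha r^2$ slack. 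Assembling the two regimes and taking $n$ large completes the proof; the case $\eps\ge\frac12$ versus $\eps<\frac12$ only affects which of $V_{\max},V_{\min}$ appears and is handled uniformly by the sign-of-$\Phi^{-1}(\eps)$ bookkeeping above.
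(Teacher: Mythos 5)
Your overall architecture is the same as the paper's: the symbol-wise converse of Proposition~\ref{th:one-shot} with $\delta=n^{-1/2}$, the mixture $Q^{(n)}$, Lemma~\ref{lm:convex} to pick the best component per sequence, Lemma~\ref{lm:Ds-converse} for the non-i.i.d.\ product, and the trade-off between the net-weight penalty $\gamma\twonorm{\vec{k}}^2$ and the quadratic mutual-information deficit of Property~3, closed by choosing $\gamma$ small. The genuine gap is the sub-case $V_{\min}=0$, $V_{\max}=V_\eps>0$ (hence $\eps\ge\frac12$), which is squarely within the scope of the proposition. In your near-CAID branch you justify the Berry--Esseen step by saying ``Property~2 or~6 gives $V(W\|Q_{\vec{k}}|P_{\vec{x}})$ bounded below,'' but Property~2 only gives $V_{\min}/2$, which is zero here, and Property~6 requires $P_{\vec{x}}\in\Pi_\mu^v$. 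For types in $\Pi_\mu$ close to a zero-variance CAID, $V_n$ can be arbitrarily small, the correction $6T_n/\sqrt{nV_n^3}$ is uncontrolled, and the Berry--Esseen route collapses; choosing $v$ so that $\Pi_\mu^v$ ``recovers $V_\eps$'' does not help, because the problematic sequences are exactly those with $P_{\vec{x}}\in\Pi_\mu\setminus\Pi_\mu^v$. The paper devotes a separate Case~c) to them: a Chebyshev-type bound against $(P_{\vec{x}}W)^{\times n}$ with $v$ chosen so small that $\sqrt{v/(1-\eps-\delta)}$ plus the type-counting penalty is absorbed into $\sqrt{V_{\max}}\,\Phi^{-1}(\eps)$, which crucially needs $\Phi^{-1}(\eps)>0$, i.e.\ $\eps>\frac12$; the boundary point $\eps=\frac12$ cannot be treated this way at all and is deferred to Proposition~\ref{prop:epshalf}(i), which reuses the $m(\vec{x})$-splitting of Proposition~\ref{pr:special}. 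Your closing remark that $\eps\ge\frac12$ versus $\eps<\frac12$ is ``handled uniformly by sign bookkeeping'' glosses over precisely this missing argument.

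There are also smaller defects worth fixing. In the far regime, the reduction via Lemma~\ref{lm:channels} to $\sup_x D_s^{\eps+\delta}\big(W(\cdot|x)^{\times n}\big\|(P_{\vec{x}}W)^{\times n}\big)$ is not what that lemma provides (it compares $P\cp W$ with $P\cp V$, not a non-identical product with a worst-case i.i.d.\ product), and the resulting first-order term $n\max_x D(W(\cdot|x)\|P_{\vec{x}}W)$ can exceed $nC$, so the conclusion would not follow; the correct move --- which you yourself make in the near regime --- is to apply Lemma~\ref{lm:Ds-converse} directly to $\prod_i W(\cdot|x_i)$ so that the first-order term is $D_n=I(P_{\vec{x}},W)$ (and one should also note the uniform bound on $V(P_{\vec{x}},W)$ used for the Chebyshev slack). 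Relatedly, Property~3 is stated only on $\Pi_\mu$; outside $\Pi_\mu$ the right justification is continuity and compactness, giving $I(P,W)\le C'<C$ uniformly. Finally, for $\eps<\frac12$ your parenthetical ``bounding by $V_{\min}$ \ldots using Property~2'' is insufficient: Property~2 only yields $V_n\ge V_{\min}/2$, which leaves an unabsorbed $\big(\sqrt{V_{\min}}-\sqrt{V_{\min}/2}\big)\abs{\Phi^{-1}(\eps)}\sqrt{n}$ gap; you need the two-sided Lipschitz estimates of Properties~7--8 (as you already use for $\Phi^{-1}(\eps)\ge 0$), which is exactly how the paper reaches $\sqrt{nV(P^*,W)}\,\Phi^{-1}(\eps)$ before maximizing over $P^*\in\Pi$.
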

\begin{remark}\label{rm:epshalf}
In the following proof of Proposition~\ref{pr:reg}, we deal with all cases  except $\eps=\frac{1}{2}$,  $V_{\min}=0$ and $V_{\max}=V_\eps>0$.   This special case will be handled in Proposition~\ref{prop:epshalf}(i) as it uses  the  proof techniques in Proposition~\ref{pr:special}. 
\end{remark}
\begin{proof}
Firstly, we employ Proposition~\ref{th:one-shot} to provide a bound on $\log   M^*(W^n, \eps)$. We choose $\delta = n^{-\frac{1}{2}}$, which satisfies $0 < \delta   < 1 - \eps$ for sufficiently large $n$. Substitute the output distribution $Q^{(n)}$ in~\eqref{eqn:Qn} to find
\begin{align*}
\log   M^*(W^n, \eps) \leq  \max_{\vec{x} \in \cX^{\times n}} \underbrace{ D_s^{\eps+\delta}\big( W^n(\cdot|\vec{x}) \big\|Q^{(n)}\big) }_{ =:\ \textrm{cv}(\vec{x})}+ 
    \frac{1}{2} \log n .
\end{align*}
It remains to show that each term $\textrm{cv}(\vec{x})$ in the maximization is upper bounded by $nC + \sqrt{n V_{\eps}} \Phi^{-1}(\eps) + G$ for a suitable constant $G$ 
 for all  sufficiently large $n$. 

We apply Lemma~\ref{lm:Pi-mu}, which supplies us with finite, positive constants $\mu$, $\eta$, $V^+$, $T^+$, $q_{\min} $, $\alpha$ and $\beta$. If $V_{\min} > 0$, we choose $v = \frac{V_{\min}}{2}$ such that $\Pi_{\mu}^v = \Pi_{\mu}$, otherwise $v > 0$ will be specified later. See Case c) below.

We distinguish between three cases for the following; either a) $\vec{x}$ satisfies $P_{\vec{x}} \notin \Pi_{\mu}$ or b) $\vec{x}$ satisfies $P_{\vec{x}}\in \Pi_{\mu}^v$ or c) $\vec{x}$ satisfies $P_{\vec{x}} \in \Pi_{\mu} \setminus \Pi_{\mu}^v$. Note that Case c) is only relevant if $V_{\min} = 0$, as otherwise 
$\Pi_{\mu}^v = \Pi_{\mu}$ by definition of $v$. This strategy in which we  partition   input  types into such classes was proposed by Strassen~\cite[Sec.~4]{Strassen}. See also~\cite[App.~I]{PPV10}. Intuitively, for Case a), $P_{\vec{x}}$ is far from the CAIDs so the first-order term is smaller than capacity; for Case b), $P_{\vec{x}}$  has high conditional information variance and thus bounded skewness so we can apply the Berry-Esseen-type bound of Lemma~\ref{lm:Ds-converse} and; for Case c), $P_{\vec{x}}$  has small conditional information variance so we must use the Chebyshev-type bound  and choose $v$ based on $V_{\max}$ instead of $V_{\min}$.

\subsubsection*{Case a): $P_{\vec{x}}\notin\Pi_\mu$}

The mutual information outside $\Pi_{\mu}$ is bounded away from the capacity, i.e.,   $I(P_{\vec{x}}, W) \leq C' < C$ for all $P_{\vec{x}}\notin\Pi_\mu$. 

Note that $Q^{(n)}$ can be written as a convex combination of the form in Lemma~\ref{lm:convex}, where the index $i$ runs over the sets $\mathcal{K}$ and $\cP_n(\cX)$. 
We first apply Lemma~\ref{lm:convex} to bound $\textrm{cv}(\vec{x})$ with $q(i) = \frac{1}{2\abs{\cP_n(\cX)}}$ and $Q^i = P_{\vec{x}}W^{\times n}$ and then Lemma~\ref{lm:Ds-converse} to bound
\begin{align*}
  \textrm{cv}(\vec{x}) &\leq 
  D_s^{\eps+\delta}\big( W^n(\cdot|\vec{x}) \big\| (P_{\vec{x}}W)^{\times n} \big) 
  + \log \big(2\, \abs{\cP_n(\cX)} \big) \\
  &\leq n I(P_{\vec{x}}, W) + 
    \sqrt{\frac{n V(P_{\vec{x}}, W)}{1-\eps-\delta}} 
    + \log \big(2\, \abs{\cP_n(\cX)} \big) .
    \end{align*}
For the second inequality, we note    that $D_n$ in Lemma~\ref{lm:Ds-converse} evaluates to
$$
D_n=\frac{1}{n}\sum_{i=1}^n \Exp_{W(\cdot|x_i)} \bigg[\log \frac{W(\cdot|x_i)}{ P_{\vec{x}}W (\cdot)}\bigg]=\Exp_{P_{\vec{x}}\times W}\bigg[\log \frac{W }{ P_{\vec{x}}W }\bigg] = D(W \|  P_{\vec{x}}W| P_{\vec{x}})= I(P_{\vec{x}},W),
$$
and similar calculation can be done to show that $V_n=V(P_{\vec{x}},W)$.  Invoking \cite[Lem.\ 62]{PPV10}  and \cite[Rmk.\ 3.1.1]{Han10} yields the uniform bound $V(P_{\vec{x}},W)\le \frac{8\log^2 e}{e^2}\, |\cY|\le 2.3\, |\cY|$. Hence,
\begin{align*}
  \textrm{cv}(\vec{x}) &\leq n C' + \sqrt{n} \sqrt{ \frac{ 2.3 \, |\cY|   }{1-\eps-\delta }}
    +\big( \abs{\cX} - 1 \big)\log \big( n + 1 \big) + \log 2 .
\end{align*}
Since $C' < C$, the linear term dominates the term  growing with the square root of $n$ and the term growing logarithmically in $n$ asymptotically. Hence, it is evident that
$\textrm{cv}(\vec{x}) \leq nC + \sqrt{n V_{\eps}} \Phi^{-1}(\eps)$ for sufficiently large $n$.

\subsubsection*{Case b): $P_{\vec{x}}\in \Pi_\mu^v$}


For each $\vec{x}$, we denote by $Q_{\vec{k}(\vec{x})}$ the element of the $\epsilon$-net  (constructed in Section~\ref{sc:output}) closest to  $P_{\vec{x}}W$. We note that since $\|Q_{\vec{k}(\vec{x})} - P_{\vec{x}}W \|_2\le \epsilon=n^{- \frac{1}{2}}$,  we have $Q_{\vec{k}(\vec{x})} \in \Gamma_{\mu}^{\eta}$ for sufficiently large $n$, which enables us to apply the properties described in Lemma~\ref{lm:Pi-mu} extensively below.
%

We first use Lemma~\ref{lm:convex} with $q(i) = \frac{\exp(-\gamma \twonorm{\vec{k}(\vec{x})}^2)}{2 F}$ and $Q^i = (Q_{\vec{k}(\vec{x})})^{\times n}$ to bound
\begin{align*}
  \textrm{cv}(\vec{x}) \leq
    D_s^{\eps+\delta}\big( W^n(\cdot|\vec{x}) \big\| (Q_{\vec{k}(\vec{x})} )^{\times n} \big) + \gamma \twonorm{\vec{k}(\vec{x})}^2 + \log \big( 2F \big).
\end{align*}
We now employ Lemma~\ref{lm:Ds-converse}, where we choose $P_i = W(\cdot|x_i)$ resulting in $D_n:= D( W \| Q_{\vec{k}(\vec{x})} | P_{\vec{x}})$, $V_n: = V( W \| Q_{\vec{k}(\vec{x})} | P_{\vec{x}}) $  and $T_n := T( W \| Q_{\vec{k}(\vec{x})} | P_{\vec{x}})$.  From Lemma~\ref{lm:Pi-mu}, 
we have that $T_n \leq T^+$ and $0 < \frac{v}{2} < V_n   \leq V^+$. We then introduce the finite constant $B := 1 + 6\sqrt{8}\,T_{+}/v^{\frac{3}{2}}$, while substituting for $\delta = n^{-\frac{1}{2}}$, to find
\begin{align*}
  \textrm{cv}(\vec{x}) &\leq n D( W \| Q_{\vec{k}(\vec{x})} | P_{\vec{x}}) 
  + \sqrt{n V(W \| Q_{\vec{k}(\vec{x})} | P_{\vec{x}})}\, 
  \Phi^{-1}\bigg(\eps + \frac{B}{\sqrt{n}} \bigg) 
  + \gamma \twonorm{\vec{k}(\vec{x})}^2 + \log \big( 2 F \big) .
\end{align*}
We now require that $n \geq N$, where $N$ is chosen large enough such that $\eps + \frac{B}{\sqrt{N}} < 1$. This ensures that the coefficient of the term growing as $\sqrt{n}$ in the above expression is finite.
Next, we use the fact that $\Phi^{-1}$ is infinitely differentiable and 
$V(W \| Q_{\vec{k}(\vec{x})} | P_{\vec{x}}) \leq V_{+}$ is finite to bound
\begin{align*}
  \sqrt{n V(W \| Q_{\vec{k}(\vec{x})} | P_{\vec{x}})} \,\Phi^{-1}\bigg(\eps + \frac{B}{\sqrt{n}} 
  \bigg) \leq \sqrt{n V(W \| Q_{\vec{k}(\vec{x})} | P_{\vec{x}})}\, \Phi^{-1}(\eps) + G_1.
\end{align*}
for some  finite constant $G_1$ and all $n \geq N$. Thus, defining $G_2 := G_1 + \log (2F)$, we find
\begin{align*}
  \textrm{cv}(\vec{x}) &\leq n  D( W \| Q_{\vec{k}(\vec{x})} | P_{\vec{x}}) 
  + \sqrt{n V(W \| Q_{\vec{k}(\vec{x})} | P_{\vec{x}})}\, 
  \Phi^{-1}(\eps) 
  + \gamma \twonorm{\vec{k}(\vec{x})}^2 + G_2,
\end{align*}

Next, we would like to replace $Q_{\vec{k}(\vec{x})}$ with $P_{\vec{x}}W$ in the above bound. This can be done without too much loss due to Lemma~\ref{lm:Pi-mu}, which states that
\begin{align*}
   D( W \| Q_{\vec{k}(\vec{x})} | P_{\vec{x}}) \leq I(P_{\vec{x}}, W) + \frac{\twonormb{P_{\vec{x}}W - Q_{\vec{k}(\vec{x})}}^2}{q_{\min} }
  \leq I(P_{\vec{x}}, W) + \frac{1}{n\, q_{\min} } 
\end{align*}
and
\begin{align*}
  \Big| \sqrt{V(W \| Q_{\vec{k}(\vec{x})} | P_{\vec{x}})} - \sqrt{V(P_{\vec{x}}, W)} \Big|
  \leq \beta \twonormb{P_{\vec{x}}W - Q_{\vec{k}(\vec{x})}} \leq \frac{\beta}{\sqrt{n}} .
\end{align*}
Hence, choosing $G_3 : = \frac{1}{q_{\min} } + \beta \big| \Phi^{-1}(\eps)\big| + G_2$, we find that
\begin{align*}
  \textrm{cv}(\vec{x}) &\leq n I(P_{\vec{x}}, W) + \sqrt{n V(P_{\vec{x}}, W)}\, \Phi^{-1}(\eps) + \gamma \twonorm{\vec{k}(\vec{x})}^2  + G_3 .
\end{align*}

In the following, we use the fact that all distributions (and types) $P_{\vec{x}}$ in $\Pi_{\mu}$ satisfy $I(P_{\vec{x}}, W) \leq  C - \alpha \xi^2$ and $|\sqrt{V(P_{\vec{x}}, W)} - \sqrt{V(P^*,W)}| \leq \beta \xi$, where 
$P^* := \argmin_{P' \in \Pi} \twonorm{P_{\vec{x}} - P'}$  (which is unique) and $\xi := \twonorm{P_{\vec{x}} - P^*}$.
Hence,
\begin{align}
  \textrm{cv}(\vec{x}) &\leq n C + \sqrt{n V(P^*, W)} \Phi^{-1}(\eps) +
     \Big( - \alpha \xi^2 n + \beta \abs{\Phi^{-1}(\eps)} \xi \sqrt{n} + \gamma \twonorm{\vec{k}(\vec{x})}^2 \Big) + G_3   . \label{eq:g3}
\end{align}
It thus remains to show that the term in the bracket is upper bounded by a constant, for an appropriate choice of $\gamma$. Let $\twonorm{W} :=\max\{ \|\vec{u}W\|_2\, |\,\|\vec{u} \|_2 \le 1\}$ be the spectral norm of the matrix $W$. It is easy to see that $\twonorm{W}\le \sqrt{ |\cX|}$. 
From the construction of the $\epsilon$-net in Section~\ref{sc:output}, 
\begin{align*}
  \twonorm{\vec{k}(\vec{x})} &= \sqrt{n \zeta }\, \twonorm{Q_{\vec{k}(\vec{x})} - Q^*} \\
    &\leq \sqrt{n \zeta } \Big( \twonorm{Q_{\vec{k}(\vec{x})} - P_{\vec{x}}W} 
      + \twonorm{P_{\vec{x}}W - Q^*} \Big) \\
    &\leq \sqrt{n \zeta } \bigg( \frac{1}{\sqrt{n}} + \twonorm{W}\, \xi  \bigg) .
\end{align*}
Substituting this bound into~\eqref{eq:g3}, we find that the term in the bracket evaluates to
\begin{align*}
  \big(\gamma \zeta  \twonorm{W}^2 - \alpha \big) \xi^2 n + 
    \big(\beta \abs{\Phi^{-1}(\eps)} + 2 \gamma \zeta  \twonorm{W} \big) \xi \sqrt{n} + \gamma \zeta 
\end{align*}

The expression is a quadratic polynomial in $\xi \sqrt{n}$ and has a finite maximum if we choose 
$\gamma$  such that $\gamma  \zeta  \twonorm{W}^2 < \alpha$. (Note that $\twonorm{W} > 0$ for any channel.) Hence, we can write
\begin{align*}
  \textrm{cv}(\vec{x}) &\leq n C + \sqrt{n V(P^*, W)} \Phi^{-1}(\eps) + G_4
\end{align*} 
for an appropriate constant $G_4$ and $n \geq N$.

\subsubsection*{Case c) $P_{\vec{x}} \in \Pi_{\mu} \setminus \Pi_{\mu}^v$}

Note that this case only appears if $V_{\min} = 0$, $V_{\max} = V_{\eps} > 0$ and $\eps \geq \frac{1}{2}$. 
We consider the case $\eps > \frac{1}{2}$ (cf.\  Remark~\ref{rm:epshalf}) leaving the $\eps=\frac{1}{2}$ case for Proposition~\ref{prop:epshalf}(i).
We have 
\begin{align}
\mathrm{cv}(\vec{x}) &\le D_s^{\eps+\delta}(W^n(\cdot|\vec{x})  \|  (P_{\vec{x}}W)^{\times n}) + \log (2|\cP_n(\cX)|) \nonumber\\
&\le nI(P_{\vec{x}},W)+ \sqrt{ \frac{n V(P_{\vec{x}},W)}{1-\eps-\delta} } + \log (2|\cP_n(\cX)|) \nonumber\\
&\le nI(P_{\vec{x}},W)+ \sqrt{ \frac{n v}{1-\eps-\delta} } + \log (2|\cP_n(\cX)|)  \nonumber
\end{align}
Now we choose $v>0$ to be any constant satisfying 
\begin{equation}  
 \sqrt{\frac{v}{1-\eps-\delta}} + \frac{\log (2|\cP_n(\cX)|)}{\sqrt{n}}\le \sqrt{V_{\max}}\Phi^{-1}(\eps) . \nonumber
\end{equation}
It  is certainly possible to find such a   $v$  since the number of types is polynomial so $\delta$ and the second term on the left are arbitrarily small for large enough $n$. Furthermore, $\sqrt{V_{\max}}\Phi^{-1}(\eps)>0$. This is where $\eps\ne \frac{1}{2}$ is crucial. Uniting the preceding   two bounds yields
\begin{align*}
\mathrm{cv}(\vec{x}) &\le nI(P_{\vec{x}},W)+ \sqrt{nV_{\max}}\Phi^{-1}(\eps) 
\le nC+ \sqrt{nV_{\max}}\Phi^{-1}(\eps) .
\end{align*}

Summarizing the bounds for Cases a), b) and c), we thus have the following asymptotic expansion for all $n$ sufficiently large:
\begin{align*}
  \log M^*(W^n, \eps) &\leq \max_{P^* \in \Pi}  nC + \sqrt{n V(P^*, W)} \Phi^{-1}(\eps) + 
    \frac12 \log n +G_4\\
  &= nC + \sqrt{n V_{\eps}} \Phi^{-1}(\eps) + \frac12 \log n + G_4,
\end{align*}
where the last equality follows by definition of $V_{\eps}$. \end{proof}

Surprisingly, the first-order approximation is accurate up to a constant term if $V_{\eps} = 0$ unless the channel is exotic and $\eps \geq \frac{1}{2}$.

\begin{proposition}
\label{pr:special}
For every DMC $W$ and $\eps \in (0,1)$ such that $V_{\eps} = 0$, the blocklength $n$, $\eps$-error capacity satisfies
$\log M^*(W^n,\eps)\le nC + O(1)$, unless the channel is exotic and $\eps \geq \frac{1}{2}$.
\end{proposition}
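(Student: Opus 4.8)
The plan is to dispense with the composite output distribution~\eqref{eqn:Qn} altogether and simply feed the product CAOD $(Q^*)^{\times n}$ into Proposition~\ref{th:one-shot}, together with a small \emph{constant} $\delta$ (so that $\log\tfrac1\delta=O(1)$). Since $Q^*$ has full support this is an admissible reference, and the claim reduces to showing $\max_{\vec x\in\cX^{\times n}} D_s^{\eps+\delta}(W^n(\cdot|\vec x)\,\|\,(Q^*)^{\times n})\le nC+O(1)$. Two structural facts drive everything. First, by the Kuhn--Tucker characterization of capacity~\cite[Thm.~4.5.1]{gallagerIT}, $D(W(\cdot|x)\|Q^*)\le C$ for every $x\in\cX$, hence $D(W\|Q^*|P_{\vec x})=\frac1n\sum_i D(W(\cdot|x_i)\|Q^*)\le C$ for every $\vec x$. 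Second, when Lemma~\ref{lm:Ds-converse} is applied with reference $Q^*$, the quantities $V_n:=V(W\|Q^*|P_{\vec x})$ and $T_n:=T(W\|Q^*|P_{\vec x})$ receive contributions only from the coordinates $i$ with $V(W(\cdot|x_i)\|Q^*)>0$: writing $s$ for their number and setting $v_1:=\min\{V(W(\cdot|x)\|Q^*):V(W(\cdot|x)\|Q^*)>0\}$, $\bar v:=\max_x V(W(\cdot|x)\|Q^*)$, $\bar t:=\max_x T(W(\cdot|x)\|Q^*)$ (all finite, and $v_1>0$, because $\cX$ is finite and $Q^*$ has full support), one gets $v_1 s\le nV_n\le\bar v s$ and $nT_n\le\bar t s$, whence the Berry--Esseen remainder obeys $6T_n/\sqrt{nV_n^3}=6(nT_n)/(nV_n)^{3/2}\le 6\bar t/(v_1^{3/2}\sqrt s)$.

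Suppose first $\eps<\tfrac12$, so that $V_\eps=V_{\min}=0$. Choose $\delta$ with $\eps+\delta<\tfrac12$, put $\rho:=\tfrac12-\eps-\delta>0$ and $s_0:=\lceil(6\bar t/(v_1^{3/2}\rho))^2\rceil$. If $s\ge s_0$, then $V_n>0$, the Berry--Esseen bound of Lemma~\ref{lm:Ds-converse} applies, and its $\Phi^{-1}$ is evaluated at a point $\le\eps+\delta+\rho=\tfrac12$, so the second-order correction is nonpositive and $D_s^{\eps+\delta}(W^n(\cdot|\vec x)\|(Q^*)^{\times n})\le nD(W\|Q^*|P_{\vec x})\le nC$. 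If $s<s_0$, the Chebyshev bound of Lemma~\ref{lm:Ds-converse} gives a correction of at most $\sqrt{s_0\bar v/(1-\eps-\delta)}=O(1)$, so the same conclusion holds up to an additive constant. In either case $\log M^*(W^n,\eps)\le nC+O(1)$; note this covers exotic channels as well, which is legitimate because for $\eps<\tfrac12$ the exceptional clause does not apply.

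Suppose instead $\eps\ge\tfrac12$, so that $V_\eps=V_{\max}=0$ by hypothesis, and $W$ is assumed not exotic. Then every symbol with $D(W(\cdot|x)\|Q^*)=C$ has $V(W(\cdot|x)\|Q^*)=0$, so every coordinate counted by $s$ satisfies $D(W(\cdot|x_i)\|Q^*)\le C-\kappa$ where $\kappa:=C-\max\{D(W(\cdot|x)\|Q^*):V(W(\cdot|x)\|Q^*)>0\}>0$ (if no symbol is noisy the channel is essentially noiseless and the bound is immediate). Hence $nD(W\|Q^*|P_{\vec x})\le nC-\kappa s$ and $nV_n\le\bar v s$, and the Chebyshev bound of Lemma~\ref{lm:Ds-converse} gives $D_s^{\eps+\delta}(W^n(\cdot|\vec x)\|(Q^*)^{\times n})\le nC-\kappa s+\sqrt{\bar v s/(1-\eps-\delta)}$. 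Since $t\mapsto-\kappa t+\sqrt{\bar v t/(1-\eps-\delta)}$ is bounded above on $[0,\infty)$ by $\bar v/(4\kappa(1-\eps-\delta))$, the right-hand side is at most $nC+O(1)$ uniformly in $\vec x$ and $n$, and therefore $\log M^*(W^n,\eps)\le nC+O(1)$.

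The main obstacle is precisely the small-$s$/large-$s$ dichotomy in the case $\eps<\tfrac12$: the input type $P_{\vec x}$ may sit arbitrarily close to a capacity-achieving input distribution whose conditional information variance is small but strictly positive, so $V_n$ is small and a bare Chebyshev estimate controls the second-order term only by $\sqrt{nV_n}$, which is $\Theta(\sqrt n)$ when the limiting variance is a fixed positive constant and hence not $O(1)$. The resolution is that the Berry--Esseen remainder in Lemma~\ref{lm:Ds-converse} is governed by the number $s$ of genuinely noisy channel uses rather than by $nV_n$ itself, and $s$ being small is exactly what forces $nV_n$ to be small; thus for every $\vec x$ one of the two estimates in Lemma~\ref{lm:Ds-converse} — Berry--Esseen with a nonpositive correction, or Chebyshev with a bounded correction — already closes the gap, so that neither the composite $Q^{(n)}$ of~\eqref{eqn:Qn} nor the continuity estimates of Lemma~\ref{lm:Pi-mu} are needed here.
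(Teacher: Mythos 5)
Your proof is correct and follows essentially the same route as the paper: the same dichotomy on the number $s=m(\vec{x})$ of positive-variance coordinates, the Berry--Esseen bound for large $s$ versus the Chebyshev bound for small $s$, and the non-exoticness argument giving $nD_n\le nC-\kappa s$ for $\eps\ge\tfrac12$. The only (cosmetic) differences are that you plug $(Q^*)^{\times n}$ directly into Proposition~\ref{th:one-shot} instead of extracting the $\vec{k}=\vec{0}$ term of $Q^{(n)}$ via Lemma~\ref{lm:convex}, and for $\eps<\tfrac12$ you take $\eps+\delta$ strictly below $\tfrac12$ so the $\Phi^{-1}$ correction is nonpositive, where the paper sets $\eps+\delta=\tfrac12$ and uses $\Phi^{-1}(\tfrac12+r)\le 3r$.
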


\begin{proof}
Again, from our bound on the converse for general channels (Proposition~\ref{th:one-shot}), we have
\begin{equation}
\log M^*(W^n,\eps)\le \max_{\vec{x} \in\cX^{\times n}}\,\,  
\underbrace{D_s^{\eps+\delta} (W^n(\cdot|\vec{x})  \|  Q^{(n)}  )}_{=:\, \textrm{cv}(\vec{x})} + \log \frac{1}{\delta}.
\label{eq:starting-point} 
\end{equation}
We upper bound $\textrm{cv}(\vec{x})$ using Lemma~\ref{lm:convex} (picking out the $\vec{k}=\vec{0}$ term) as follows:
\begin{align*}
\textrm{cv}(\vec{x})\le D_s^{\eps+\delta} \big(W^n(\cdot|\vec{x})  \| ( Q^*)^{\times n} \big)+\log\big(2F\big).
\end{align*}
 We also choose $\delta = \frac{1}{2} - \eps$ if $\eps < \frac{1}{2}$ and $\delta = \frac{1-\eps}{2}$ otherwise; hence, the term $\log \frac{1}{\delta}$ is finite and independent of $n$. Also let $m(\vec{x})$ be the number of non-zero variance letters in $\vec{x}$, i.e., 
$m(\vec{x}):= n  P_{\vec{x}}(\cX_+) = \sum_{i=1}^n 1 \{x_i\in\cX_+ \}$
where $\cX_+:=\{x\in \cX: V(W(\cdot|x) \| Q^*)>0\}$. There exist finite constants $v_{\min}, v_{\max}$ and $t_{\max}$ such that, for every $x\in \cX_+$,
\begin{equation*}
0<v_{\min}\le V(W(\cdot|x) \| Q^*)\le v_{\max}, \qquad \textrm{and} \qquad T(W(\cdot|x) \| Q^*) \leq t_{\max} .
\end{equation*}
By the definitions of $D_n := D(W\|Q^*|P_{\vec{x}})$,  $V_n := V(W\|Q^*|P_{\vec{x}})$ and $T_n := T(W\|Q^*|P_{\vec{x}})$ (cf.~Lemma~\ref{lm:Ds-converse}), we   have
\begin{equation}
\frac{m(\vec{x})}{n} v_{\min}\le V_n\le \frac{m (\vec{x})}{n} v_{\max},\qquad \textrm{and} \qquad
T_n \le \frac{m (\vec{x})}{n} t_{\max} \label{eqn:bounds_VT} .
\end{equation}
Further defining $B_n := 6 \, T_n / V_n^{\frac{3}{2}}$, we thus find
\begin{equation*}
B_n\le\sqrt{ \frac{n}{m( \vec{x})} } L \qquad \textrm{where} \qquad L := \frac{6 \, t_{\max}}{v_{\min}^{3/2}} <\infty.
\end{equation*}

Let $m^*$ be an integer satisfying $L/\sqrt{m^*} \le r'$ where $r'$ is chosen such that $\Phi^{-1}(\frac{1}{2}+r)\le 3\,  r$ for all $r \in [0,r']$. The choice $r'=0.35$ does the job.


For $\eps < \frac{1}{2}$, 
following Strassen's argument~\cite[Eq.\ (4.53)-(4.54)]{Strassen} (see also~\cite[App.~I]{PPV10}), we distinguish between two classes of sequences as follows:
the sequence $\vec{x}$ satisfies either a) $m(\vec{x}) \geq m^*$, or b) $m(\vec{x}) < m^*$.
Finally, c) considers the case where $W$ is not exotic and $\eps \geq \frac{1}{2}$. Intuitively, for Case a), we can use the Berry-Esseen-type bound because $m(\vec{x})$ is large, and hence $B_n$ can be bounded  appropriately; for Case b), we use the Chebyshev-type bound because $m(\vec{x})$ is small and; for Case c), we use the non-exoticness  of $W$ to bound $D_n$ far away from $C$.

\subsubsection*{Case a): $\eps < \frac{1}{2}$ and $m(\vec{x}) \geq m^*$}

We apply the Berry-Esseen-type bound in Lemma~\ref{lm:Ds-converse} to~\eqref{eq:starting-point} to find
\begin{align}
  \textrm{cv}(\vec{x}) &\leq n D_n + \sqrt{n V_n} 
  \Phi^{-1}\left(\eps + \delta + \frac{B_n}{\sqrt{n}}\right) \nonumber\\
  &\leq n D_n + \sqrt{n V_n} \Phi^{-1}\left( \frac{1}{2} + \frac{L}{\sqrt{m(\vec{x})}} \right)
  \leq n D_n + 3 L\, \sqrt{\frac{n V_n}{m(\vec{x})}} \, . \label{eqn:be_type}
\end{align}
Here, we used the fact that $\eps + \delta = \frac{1}{2}$ by definition of $\delta$ and the proof concludes
with the observation that $\frac{n V_n}{m(\vec{x})} \leq v_{\max}$ is bounded by a constant, 
and $D_n \leq C$ for all $\vec{x}$.

\subsubsection*{Case b): $\eps < \frac{1}{2}$ and $m(\vec{x}) < m^*$}

We use the Chebyshev-type bound in Lemma~\ref{lm:Ds-converse} to~\eqref{eq:starting-point}   yielding
\begin{equation}
  \mathrm{cv}(\vec{x}) \leq n D_n + \sqrt{\frac{n V_n}{1-\eps-\delta}} 
  = nD_n + \sqrt{2 n V_n} .\label{eqn:ch_type}
\end{equation}
Since by \eqref{eqn:bounds_VT}, $n V_n 
\le m^* v_{\max}$ and $D_n \leq C$ for all   $\vec{x}$, we find the desired bound.

\subsubsection*{Case c): not exotic, $\eps \geq \frac{1}{2}$}

Lemma~\ref{lm:Ds-converse} applied to~\eqref{eq:starting-point} again yields
\begin{equation*}
 \mathrm{cv}(\vec{x}) \leq n D_n + \sqrt{\frac{nV_n}{1-\eps-\delta} } =
n D_n + \sqrt{\frac{2 n V_n}{1-\eps} } \, ,
\end{equation*}
because in this case, $\delta=\frac{1-\eps}{2}$.   By virtue of the fact that $V_{\max}=0$ and $W$ is not exotic, we have that either
\begin{equation}
D(W(\cdot|x) \| Q^* )< C\quad\mbox{or}\quad V(W(\cdot|x) \| Q^* )=0 \label{eqn:case4_clauses}
\end{equation}
for all symbols $x\in\cX$. If $\cX_{+}$ is empty, we have $V_n = 0$ and the bound is immediate. Otherwise,
we define $\psi := C - \max_{x \in \cX_{+}} D(W(\cdot|x)\|Q^*) > 0$, which is positive due to the condition in~\eqref{eqn:case4_clauses}.

Using this, we find that $n D_n \leq n C - m(\vec{x}) \psi$ 
and $n V_n \leq v_{\max} m(\vec{x})$ by~\eqref{eqn:bounds_VT}. Thus,
\begin{equation*}
  \mathrm{cv}(\vec{x}) \leq n C - m(\vec{x}) \psi + \sqrt{\frac{ 2 m(\vec{x}) v_{\max}}{1-\eps} }
\end{equation*}
The latter two terms constitute a quadratic polynomial in $\sqrt{m(\vec{x})}$, and hence, their sum has a finite maximum. 
\end{proof}

Finally, we deal with the case that was left out in Proposition~\ref{pr:reg}. 
\begin{proposition}
\label{prop:epshalf}
Let $\eps=\frac{1}{2}$. The following hold:
\begin{itemize}
\item[(i)] For every DMC $W$ such that $V_{\min}=0$ and  $V_{\max}>0$, the blocklength $n$, $\eps$-error capacity satisfies $\log M^*(W^n,\eps)\le nC + \frac{1}{2}\log n + O(1)$. 
\item[(ii)]  For every exotic DMC $W$ (in particular, $V_{\max}=0$),   the same bound as in (i) holds.
\end{itemize}
\end{proposition}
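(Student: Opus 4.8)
The plan is to reuse the proof technique of Proposition~\ref{pr:special}. The obstruction it leaves open is that at $\eps=\frac{1}{2}$ no \emph{constant} $\delta$ keeps $\eps+\delta$ at or below $\frac{1}{2}$, while a constant $\delta$ with $\eps+\delta>\frac{1}{2}$ makes $\Phi^{-1}(\eps+\delta)$ strictly positive and hence injects a $\Theta(\sqrt{n})$ term for the input types that are CAIDs of positive variance (part~(i)) or that concentrate on the symbol $x_0$ (part~(ii)); I therefore take $\delta=n^{-\frac{1}{2}}$, which keeps $\eps+\delta=\frac{1}{2}+o(1)$ at the price of exactly the extra $\frac{1}{2}\log n=\log\frac{1}{\delta}$. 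Concretely, I would apply Proposition~\ref{th:one-shot} with $\delta=n^{-\frac{1}{2}}$ (legitimate for large $n$, since $\eps+\delta=\frac{1}{2}+n^{-\frac{1}{2}}<1$) and, as in the proof of Proposition~\ref{pr:special}, use Lemma~\ref{lm:convex} on $Q^{(n)}$ from~\eqref{eqn:Qn} retaining only the $\vec{k}=\vec{0}$ term $(Q^*)^{\times n}$, obtaining
\begin{align*}
  \log M^*(W^n,\eps)\ \leq\ \max_{\vec{x}\in\cX^{\times n}}\ \underbrace{D_s^{\eps+\delta}\big(W^n(\cdot|\vec{x})\big\|(Q^*)^{\times n}\big)}_{=:\ \textrm{cv}(\vec{x})}\ +\ \log(2F)\ +\ \frac{1}{2}\log n ,
\end{align*}
with $\log(2F)=O(1)$. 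It then suffices to prove $\textrm{cv}(\vec{x})\leq nC+O(1)$ uniformly in $\vec{x}$; this uses neither $V_{\max}=0$ nor any exoticness property, so it settles (i) and (ii) at once.

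I keep the bookkeeping of Proposition~\ref{pr:special}: let $\cX_+:=\{x : V(W(\cdot|x)\|Q^*)>0\}$, let $m(\vec{x})$ count the indices $i$ with $x_i\in\cX_+$, let $0<v_{\min}\leq v_{\max}$ and $t_{\max}$ be uniform bounds on $V(W(\cdot|x)\|Q^*)$ and $T(W(\cdot|x)\|Q^*)$ over $x\in\cX_+$, and set $V_n:=V(W\|Q^*|P_{\vec{x}})$, $T_n:=T(W\|Q^*|P_{\vec{x}})$, $B_n:=6\,T_n/V_n^{3/2}$, $L:=6\,t_{\max}/v_{\min}^{3/2}$; fix an integer $m^*$ with $L/\sqrt{m^*}\leq r'/2$, where $r'=0.35$ satisfies $\Phi^{-1}(\frac{1}{2}+r)\leq 3r$ on $[0,r']$. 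The only structural fact needed is $D_n:=D(W\|Q^*|P_{\vec{x}})\leq C$ for every $\vec{x}$ (the Karush--Kuhn--Tucker property of the CAOD, already invoked in the proof of Proposition~\ref{pr:special}). I then split the types into $m(\vec{x})\geq m^*$ and $m(\vec{x})<m^*$, i.e.\ Cases a) and b) of that proof; Case c) --- the sole place non-exoticness was used --- is not needed, precisely because we now only target $nC+\frac{1}{2}\log n+O(1)$.

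For $m(\vec{x})\geq m^*$, Lemma~\ref{lm:Ds-converse} gives $\textrm{cv}(\vec{x})\leq nD_n+\sqrt{nV_n}\,\Phi^{-1}\!\big(\frac{1}{2}+n^{-\frac{1}{2}}+\frac{B_n}{\sqrt{n}}\big)$; since $\frac{B_n}{\sqrt{n}}\leq\frac{L}{\sqrt{m(\vec{x})}}\leq\frac{r'}{2}$ and $n^{-\frac{1}{2}}\leq\frac{r'}{2}$ for large $n$, the argument of $\Phi^{-1}$ lies in $[\frac{1}{2},\frac{1}{2}+r']$, so $\Phi^{-1}(\frac{1}{2}+r)\leq 3r$ collapses the second-order term to $3\sqrt{V_n}(1+B_n)=3\sqrt{V_n}+18\,T_n/V_n$; both summands are $O(1)$ because $V_n\leq v_{\max}$ and $T_n/V_n\leq t_{\max}/v_{\min}$ (the factors $m(\vec{x})/n$ in $T_n\leq\frac{m(\vec{x})}{n}t_{\max}$ and $V_n\geq\frac{m(\vec{x})}{n}v_{\min}$ cancel). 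For $m(\vec{x})<m^*$, the Chebyshev bound~\eqref{eq:cheby} gives $\textrm{cv}(\vec{x})\leq nD_n+\sqrt{nV_n/(\frac{1}{2}-n^{-\frac{1}{2}})}\leq nD_n+2\sqrt{nV_n}$ for large $n$, and $nV_n\leq m(\vec{x})\,v_{\max}<m^*v_{\max}=O(1)$. Together with $D_n\leq C$, both cases give $\textrm{cv}(\vec{x})\leq nC+O(1)$, finishing the proof.

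The main --- really the only --- obstacle is controlling the Berry--Esseen correction at $\eps=\frac{1}{2}$: when $m(\vec{x})$ is a small but \emph{growing} fraction of $n$, $V_n$ is small and $B_n$ large, so one must check that $\sqrt{nV_n}\,\Phi^{-1}\big(\frac{1}{2}+\Theta(B_n/\sqrt{n})\big)$ remains bounded rather than diverging. It does, because once $\Phi^{-1}$ is linearised at $\frac{1}{2}$ the product reduces to the scale-invariant ratio $T_n/V_n$, bounded since $T_n$ and $V_n$ both scale like $m(\vec{x})/n$ --- exactly the cancellation hidden inside Case a) of Proposition~\ref{pr:special}. Making it explicit, together with the choice $\delta=n^{-\frac{1}{2}}$, is what simultaneously forces and affords the $\frac{1}{2}\log n$, and what removes the need for the non-exoticness-dependent Case c), thereby covering both the high-variance CAIDs of part~(i) and the exceptional symbol $x_0$ of part~(ii).
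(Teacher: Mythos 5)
Your proposal is correct and follows essentially the same route as the paper's own proof: choose $\delta=n^{-\frac{1}{2}}$ in Proposition~\ref{th:one-shot} so that $\log\frac{1}{\delta}=\frac{1}{2}\log n$, extract the $(Q^*)^{\times n}$ component of $Q^{(n)}$ via Lemma~\ref{lm:convex}, split on $m(\vec{x})\gtrless m^*$, and use the Berry--Esseen bound with the linearization $\Phi^{-1}(\frac{1}{2}+r)\le 3r$ in the first case and the Chebyshev bound in the second, with no use of exoticness or $V_{\max}$. Your bookkeeping (taking $L/\sqrt{m^*}\le r'/2$ and $n^{-1/2}\le r'/2$ so the argument of $\Phi^{-1}$ stays in $[\frac12,\frac12+r']$, and exhibiting the cancellation $\sqrt{V_n}B_n=6T_n/V_n\le 6t_{\max}/v_{\min}$) is if anything slightly more explicit than the paper's, which absorbs both effects into the factor $3(L+1)\sqrt{nV_n/m(\vec{x})}$.
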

\begin{proof}
By placing no assumptions on $V_{\max}\ge 0$, we can prove both parts in tandem. The proof follows  closely that of Proposition~\ref{pr:special} with the exception that we choose $\delta=n^{-\frac{1}{2}}$ so the $\log \frac{1}{\delta}$ term evaluates to $\frac{1}{2}\log n$. It remains to show that $\textrm{cv}(\vec{x})\le nC + O(1)$.  We split the analysis into Cases a) and b) as in  Proposition~\ref{pr:special} and let $D_n := D(W\|Q^*|P_{\vec{x}})$ and $V_n := V(W\|Q^*|P_{\vec{x}})$. 
\subsubsection*{Case a): $\eps=\frac{1}{2}$, $V_{\min}=0$   and $m(\vec{x})\ge m^*$ }
By the same  steps that led to \eqref{eqn:be_type},  we have
\begin{equation}
  \textrm{cv}(\vec{x})\le n D_n + 3 \,  (L+1)\, \sqrt{\frac{n V_n}{m(\vec{x})}}\, \nonumber
\end{equation}
because $\delta=n^{-\frac{1}{2}}$. We obtain the desired bound by noting that $\frac{n V_n}{m(\vec{x})}\le v_{\max}$ and $D_n\le C$. 
\subsubsection*{Case b): $\eps=\frac{1}{2}$, $V_{\min}=0$  and $m(\vec{x}) <  m^*$ }
By the same steps that led to \eqref{eqn:ch_type},  we have
\begin{equation}
  \textrm{cv}(\vec{x})\le n D_n + \sqrt{4 n V_n} \nonumber
\end{equation}
because $1-\eps-\delta=\frac{1}{2}-\delta\ge \frac{1}{4}$ for all   $n\ge 4$. The proof is completed by noting that $nV_n\le m^*v_{\max}$ and $D_n\le C$.\end{proof} 


\begin{proof}[Proof of Theorem~\ref{th:main}]
The first statement follows by Propositions~\ref{pr:reg} and \ref{prop:epshalf}(i). The second statement follows by Proposition~\ref{pr:special}.
\end{proof}

\section{Conclusion and Open Problems}

We have presented improved converse (upper) bounds on the blocklength $n$, $\eps$-average error capacity $M^*(W^n,\eps)$. These bounds are tight in the third-order for all DMCs with positive reverse dispersion~\cite[Thm.~53]{Pol10}. However, the BEC (with zero reverse dispersion) is a notable example for which our result is not tight and in fact overestimates  $\log M^*(W^n,\eps)$ by $\frac12 \log n$. To prove a tight converse bound on the third-order for the BEC, a different non-product choice for $Q^{(n)}$ is necessary, as was pointed out recently by Polyanskiy~\cite[Thm.~23]{Polyanskiy13}. It remains to investigate whether a combination of Polyanskiy's choice and our choice of output distribution can be used to derive tight third-order asymptotic  bounds for \emph{all} DMCs.

Our general converse bound in Proposition~\ref{th:one-shot} can be specialized to channels with cost constraints. As such, it can be applied to the AWGN channel with maximal (or equal) power constraints and the evaluation of Proposition~\ref{th:one-shot} using the product CAOD yields the $\frac12 \log n + O(1)$ upper bound on the third-order term~\cite[Thm.~54]{PPV10}.  It would be interesting to check if the evaluation of Proposition~\ref{th:one-shot} yields the same upper bound for the finite-dimensional infinite constellations problem~\cite[Thm.~13]{ingber13}. 

\subsubsection*{Acknowledgements}

MT thanks Ligong Wang for helpful explanations. VYFT thanks Yanina Shkel for insightful discussions and Pierre Moulin for sharing his ITA paper~\cite{Mou12a}. 
MT is supported by the National Research Foundation and the Ministry of Education of Singapore.
VYFT would like to acknowledge funding support from the Agency for Science, Technology and Research (A*STAR), Singapore.

\bibliographystyle{unsrt}

\bibliography{isitbib}

\end{document}